\newcommand*{\CTIV}{Vovk:arXiv0904}
  \newcommand*{\CTVI}{Vovk:arXiv1108}
  \newcommand*{\CTXI}{Vovk:arXiv1604}
  \newcommand{\PerkowskiPromelIntegrals}{Perkowski/Promel:2013}
  \newcommand{\PerkowskiPromelLocal}{Perkowski/Promel:2014}
\newcommand{\zzrelax}[1]{\relax}
\newlength{\IndentI}
\newlength{\IndentII}
\newlength{\IndentIII}
\newlength{\WidthI}
\newlength{\WidthII}
\newlength{\WidthIII}
  \newcommand{\Extra}[1]{}
\newcommand*{\st}{\mathrel{|}}
\newcommand*{\dd}{\,\mathrm{d}}
\newcommand*{\K}{\mathcal{K}}
\newcommand*{\BBB}{\mathcal{B}}
\newcommand*{\FFF}{\mathcal{F}}
\DeclareMathOperator{\III}{\boldsymbol{1}}  
\DeclareMathOperator{\sign}{sign}           
\DeclareMathOperator{\osc}{osc}             
\DeclareMathOperator{\tame}{tame}           
  \newcommand*{\bbbp}{\mathbb{P}}      
\DeclareMathOperator{\UpProb}{\overline{\bbbp}}    
\newcommand{\lomega}{\omega_*}   
\newcommand{\uomega}{\omega^*}   
\newcommand{\lphi}{\phi_*}       
\newcommand{\uphi}{\phi^*}       
  \newcommand*{\bbbr}{\mathbb{R}}    
  \theoremstyle{plain}
  \newtheorem{theorem}{Theorem} 
  \newtheorem{lemma}{Lemma}
  \theoremstyle{definition}
  \newtheorem{remark}{Remark}
  \newtheorem{example}{Example}
  \title{Purely pathwise probability-free It\^o integral\thanks{The version of this paper
    at \url{http://probabilityandfinance.com} (Working Paper 42)
    is updated more often.}}
  \author{Vladimir Vovk\\
  \texttt{v.vovk{\rm@}rhul.ac.uk}}
\begin{document}
  \maketitle

  \begin{abstract}
This paper gives several simple constructions of the pathwise It\^o integral $\int_0^t\phi\dd\omega$
for an integrand $\phi$ and a price path $\omega$ as integrator,
with $\phi$ and $\omega$ satisfying various topological and analytical conditions.
The definitions are purely pathwise in that neither $\phi$ nor $\omega$ are assumed to be paths of stochastic processes,
and the It\^o integral exists almost surely in a non-probabilistic financial sense.
For example, one of the results shows the existence of $\int_0^t\phi\dd\omega$ for a c\`adl\`ag integrand $\phi$
and a c\`adl\`ag integrator $\omega$ with jumps bounded in a predictable manner.
  \end{abstract}

\section{Introduction}

The structure of this paper is as follows.
To set the scene, Section~\ref{sec:literature} briefly describes papers and results that I am aware of
related to the area of probability-free pathwise It\^o integration.
In Section~\ref{sec:continuous} we define the meaning of the phrase ``some property holds almost surely''
in a probability-free manner and prove the almost sure existence of the pathwise stochastic integral $\int_0^t\phi\dd\omega$
assuming (apart from the possibility of trading in $\omega$)
that $\phi$ and $\omega$ are continuous (Theorem~\ref{thm:continuous});
the definition is ``purely pathwise'' in that neither $\omega$ nor $\phi$ are assumed to be paths of processes,
and they can be chosen separately.
Other papers (e.g., \cite{\CTIV,\PerkowskiPromelIntegrals,Lochowski:2015}) use expressions such as ``for typical outcomes''
instead of ``almost surely'' to avoid confusion with the probabilistic notion of ``almost surely'',
but there is no danger of confusion in this paper as we will never need the probabilistic notion.
Theorem~\ref{thm:continuous} is proved in the following section, Section~\ref{sec:proof};
the proof relies on a primitive ``self-normalized game-theoretic supermartingale''
introduced in Appendix~\ref{sec:A}
and a game-theoretic version of a classical martingale
introduced in Appendix~\ref{sec:B}.
The proof can also be extracted from \cite{\PerkowskiPromelIntegrals}
(which, however, does not state Theorem~\ref{thm:continuous} explicitly).
Section~\ref{sec:variation} shows that continuous price paths possess quadratic variation almost surely;
in principle, this is a known result (\cite{\CTIV}, Theorem~5.1(a)),
but we prove it in a slightly different setting (the one required for our Theorem~\ref{thm:continuous}).
Once we have the quadratic variation, we can state a simple version of It\^o's formula (Theorem~\ref{thm:Ito})
and show the coincidence of our integral with F\"ollmer's \cite{Follmer:1981} in Section~\ref{sec:Ito-lemma}.
Section~\ref{sec:cadlag} gives a definition of the It\^o integral $\int_0^t\phi\dd\omega$
in the case of c\`adl\`ag $\phi$ and $\omega$.
Theorem~\ref{thm:cadlag}, asserting the existence of It\^o integral in this case,
is proved similarly to Theorem~\ref{thm:continuous}.
The reader will notice that the setting of the former theorem is more complicated,
and so we cannot say that it contains the latter as a special case.
Section~\ref{sec:non-cadlag} makes a first step in defining purely pathwise It\^o integrals $\int_0^t\phi\dd\omega$
for non-c\`adl\`ag $\phi$ assuming, for simplicity, that $\omega$ is continuous.
Finally, Section~\ref{sec:conclusion} concludes by listing some directions of further research.

\section{Related literature}
\label{sec:literature}

The first paper to give a probability-free definition of It\^o integral was F\"ollmer's \cite{Follmer:1981},
who defined the integral $\int_0^t\phi\dd\omega$ in the case where $\phi$ is obtained
by composing a regular function $f$ (namely, $f=F'$ for a $C^2$ function $F$) with $\omega$
(for simplicity, let us assume that $\omega$ is continuous in this introductory section).
F\"ollmer's definition is pathwise in $\omega$ but not purely pathwise,
as $\phi$ is a function of $\omega$.
Cont and Fourni\'e \cite{Cont/Fournie:2010} extend F\"ollmer's results
by replacing the composition of $f$ and $\omega$ by applying a non-anticipative functional $f$
(also of the form $F'$ where $F$ is a non-anticipative functional of a class denoted $\mathbb{C}^{1,2}$
and the prime stands for ``vertical derivative'').
Cont and Fourni\'e's definition is not quite pathwise in $\omega$,
but this is repaired by Ananova and Cont in \cite{Ananova/Cont:2016}
(for the price of additional restrictions on the non-anticipative functional $F$).
Other papers (such as Perkowski and Pr\"omel \cite{\PerkowskiPromelLocal} and Davis et al.\ \cite{Davis/etal:arXiv1508})
extend F\"ollmer's results by relaxing the regularity assumptions about $f$,
which requires inclusion of local time.
All these papers assume that $\omega$ possesses quadratic variation (defined in a pathwise manner),
and this assumption is satisfied when $\omega$ is a typical price path
(see, e.g., \cite{Riga:2016};
the existence of quadratic variation for such $\omega$ was established in, e.g., \cite{\CTIV} and \cite{\CTVI};
precise definitions will be given below).
The existence of local times for typical continuous price paths follows from the main result of \cite{\CTIV}
(as explained in \cite{\PerkowskiPromelLocal}, p.~13)
and was explicitly demonstrated, together with its several nice properties,
in \cite{\PerkowskiPromelLocal} (Theorem~3.5).

The more recent paper \cite{Nutz:2012} is not completely probability-free.
Besides, it depends on additional axioms of set theory (adding the continuum hypothesis is sufficient),
and as the author points out, his ``\,{`construction'} of the stochastic integral is not `constructive' in the proper sense;
it merely yields an existence result''.
This paper's construction is explicit.

Another paper on this topic is \cite{\CTVI}, but the construction used in it is F\"ollmer's,
and the only novelty in \cite{\CTVI} is that it shows the existence of quadratic variation for typical c\`adl\`ag price paths
(under a condition bounding jumps).

To clarify the relation between the usual notion of ``pathwise'' and what we call ``purely pathwise'',
let us consider two examples in which pathwise definitions are in fact purely pathwise but very restrictive.

\begin{example}[Glenn Shafer]
  Consider the F\"ollmer-type definition of the It\^o integral $\int_0^t f(\omega(s),s)\dd\omega(s)$
  for a time-dependent function $f$
  (\cite{Sondermann:2006}, Corollary~2.3.6; this definition is implicit in \cite{Follmer:1981}).
  If $f$ does not depend on its first argument, $f(\cdot,s)=\phi(s)$,
  we obtain a purely pathwise definition of $\int_0^t\phi\dd\omega$.
  The problem is that the function $f$ has to be very regular (of class $C^{2,1}$),
  and so this construction works only for very regular $\phi$ (such as $C^1$).
\end{example}

\begin{example}\label{ex:second}
  The second example is provided by F\"ollmer's definition of the It\^o integral
  $\int_0^t\nabla F(X(s))\dd X(s)$
  for a function $X:[0,\infty)\to\bbbr^d$ having pathwise quadratic variation
  (as defined by F\"ollmer);
  this definition is given in, e.g., \cite{Follmer:1981}, pp.~147--148, and \cite{Sondermann:2006}, Theorem~2.3.4.
  Let us take $d=2$ and denote the components of $X$ as $\phi$ and $\omega$:
  $X(t)=(\phi(t),\omega(t))$ for all $t\in[0,\infty)$.
  For the existence of pathwise quadratic variation,
  it suffices to assume that $\phi$ and $\omega$ are the price paths of different securities
  in an idealized financial market
  (see, e.g., \cite{\CTVI}, Section~5).
  Taking $F(\phi,\omega):=\phi\omega$,
  we obtain the definition of the sum
  of purely pathwise It\^o integrals $\int_0^t\phi\dd\omega$ and $\int_0^t\omega\dd\phi$.
  In this special case the integrand is no longer a function of the integrator,
  but even if we ignore the fact that $\int_0^t\phi\dd\omega$ and $\int_0^t\omega\dd\phi$
  are still not defined separately,
  the fact that $\phi$ and $\omega$ are co-traded in the same market
  introduces a lot of logical dependence between them;
  e.g., in the case where $\phi(t)=\omega(t-\epsilon)$ for some $\epsilon>0$ and for all $t\ge\epsilon$
  we would expect the integral $\int_0^t\phi\dd\omega$ to be well-defined
  but a market in which such $\phi$ and $\omega$ are traded becomes a money machine
  (unless $\phi$ and $\omega$ are degenerate, such as constant).
  Even if $\phi$ is not a price path of a traded security,
  the existence of quadratic variation is a strong and unnecessary assumption.
  This paper completely decouples $\phi$ and $\omega$ (at least in the c\`adl\`ag case),
  and $\phi$ is never assumed to be a price path.
\end{example}

This paper is inspired by Rafa\l{} \L{}ochowski's recent paper \cite{Lochowski:2015},
which introduces the It\^o integral $\int_0^t\phi\dd\omega$ for a wide class of trading strategies $\phi$ as integrands
in a probability-free setting similar to that \cite{\CTIV} and \cite{\PerkowskiPromelIntegrals};
the main advance of \cite{Lochowski:2015} as compared with \cite{\PerkowskiPromelIntegrals}
is its treatment of c\`adl\`ag price processes.
The main observation leading to this paper is that $\int_0^t\phi\dd\omega$
can be defined without assuming that $\phi$ is the realized path of a given strategy.

Papers that give purely pathwise definitions of It\^o integral
include \cite{Bichteler:1981} (Theorem~7.14) and \cite{Karandikar:1995},
but the existence results in those papers are not probability-free.

Finally, on the face of it, the paper \cite{\PerkowskiPromelIntegrals} by Perkowski and Pr\"omel
does not give a purely pathwise definition
(namely, they assume the integrand to be a process rather than a path).
Perkowski and Pr\"omel consider two approaches to defining It\^o integral.
A disadvantage of their second approach is that it
``restricts the set of integrands to those {`locally looking like'}\,'' $\omega$
(\cite{\PerkowskiPromelIntegrals}, the beginning  of Section~4).
Their first approach (culminating in their Theorem~3.5) constructs $\int_0^t\phi\dd\omega$
in the case where $\phi$ is a path of a process on the sample space of continuous paths in $\bbbr^d$,
making $\phi$ a non-anticipative function of $\omega$.
It can, however, be applied to $\omega$ consisting of two components
that can be used as the integrand and the integrator
(as in Example~\ref{ex:second} above) and, crucially,
the proof of their Theorem~3.5 (see also Corollary~3.6) shows \cite{Perkowski/Promel:2016-private}
that trading in the integrand is not needed;
therefore, it also proves our Theorem~\ref{thm:continuous}.

\section{Definition of It\^o integral in the continuous case}
\label{sec:continuous}

In our terminology and definitions we will follow mainly Section~2
of the technical report \cite{\CTIV}.
We consider a game between Reality (a financial market) and Sceptic (a trader) in continuous time:
the time interval is $[0,\infty)$.
First Sceptic chooses his trading strategy (to be defined momentarily)
and then Reality chooses continuous functions $\omega$ and $\phi$ mapping $[0,\infty)$ to $\mathbb{R}$;
$\omega$ is interpreted as the price path of a financial security (not required to be nonnegative),
and $\phi$ is simply the function that we wish to integrate by $\omega$.
To formalize this picture we will be using Galmarino-style definitions,
which are more intuitive than the standard ones
(used in the journal version of \cite{\CTIV});
see, e.g., \cite{Courrege/Priouret:1965}.

Let
\begin{equation}\label{eq:sample-continuous}
  \Omega:=C[0,\infty)^2
\end{equation}
be the set of all possible pairs $(\omega,\phi)$;
it is our \emph{sample space}.
We equip $\Omega$ with the $\sigma$-algebra $\FFF$
generated by the functions $(\omega,\phi)\in\Omega\mapsto(\omega(t),\phi(t))$, $t\in[0,\infty)$
(i.e., the smallest $\sigma$-algebra making them measurable).
We often consider subsets of $\Omega$ and functions on $\Omega$
that are measurable with respect to $\FFF$.
As shown in \cite{\CTXI}, the requirement of measurability is essential:
without it, it becomes too easy to become infinitely rich infinitely quick.

As usual, an \emph{event} is an $\FFF$-measurable set in $\Omega$,
a \emph{random variable} is an $\FFF$-measurable function of the type $\Omega\to\bbbr$,
and an \emph{extended random variable} is an $\FFF$-measurable function of the type $\Omega\to[-\infty,\infty]$.
Each $o=(\omega,\phi)\in\Omega$ is identified with the function $o:[0,\infty)\to\bbbr^2$
defined by
\begin{equation*}
  o(t)
  :=
  (\omega(t),\phi(t)),
  \quad
  t\in[0,\infty).
\end{equation*}
A \emph{stopping time} is an extended random variable $\tau:\Omega\to[0,\infty]$ such that,
for all $o$ and $o'$ in $\Omega$,
\begin{equation*}
  \left(
    o|_{[0,\tau(o)]}
    =
    o'|_{[0,\tau(o)]}
  \right)
  \Longrightarrow
  \tau(o)=\tau(o'),
\end{equation*}
where $f|_A$ stands for the restriction of $f$ to the intersection of $A$ and $f$'s domain.
A random variable $X$ is said to be \emph{$\tau$-measurable},
where $\tau$ is a stopping time,
if,
for all $o$ and $o'$ in $\Omega$,
\begin{equation*}
  \left(
    o|_{[0,\tau(o)]}
    =
    o'|_{[0,\tau(o)]}
  \right)
  \Longrightarrow
  X(o)=X(o').
\end{equation*}
As customary in probability theory,
we will often omit explicit mention of $o\in\Omega$ when it is clear from the context.

A \emph{simple trading strategy} $G$ is defined to be a pair $((\tau_1,\tau_2,\ldots),(h_1,h_2,\ldots))$,
where:
\begin{itemize}
\item
  $\tau_1\le\tau_2\le\cdots$ is a nondecreasing sequence of stopping times
  such that, for each $o\in\Omega$,
  $\lim_{n\to\infty}\tau_n(o)=\infty$;
\item
  for each $n=1,2,\ldots$, $h_n$ is a bounded $\tau_{n}$-measurable function.
\end{itemize}
The \emph{simple capital process} corresponding to a simple trading strategy $G$
and an \emph{initial capital} $c\in\bbbr$ is defined, for $o=(\omega,\phi)$, by
\begin{equation*} 
  \K^{G,c}_t(o)
  :=
  c
  +
  \sum_{n=1}^{\infty}
  h_n(o)
  \bigl(
    \omega(\tau_{n+1}\wedge t)-\omega(\tau_n\wedge t)
  \bigr),
  \quad
  t\in[0,\infty),
\end{equation*}
where the zero terms in the sum are ignored
(which makes the sum finite for each $t$).
The value $h_n(o)$ is Sceptic's \emph{bet} at time $\tau_n=\tau_n(o)$,
and $\K^{G,c}_t(o)$ is Sceptic's \emph{capital} at time $t$.
The intuition behind this definition is that Sceptic is allowed to bet only on $\omega$,
but the current and past values of both $\omega$ and $\phi$ can be used for choosing the bets.

A \emph{nonnegative capital process} is any function $\mathfrak{S}$
that can be represented in the form
\begin{equation}\label{eq:positive-capital}
  \mathfrak{S}_t
  :=
  \sum_{n=1}^{\infty}
  \K^{G_n,c_n}_t,
\end{equation}
where the simple capital processes $\K^{G_n,c_n}$
are required to be nonnegative
(i.e., $\K^{G_n,c_n}_t(o)\ge0$ for all $t$ and $o\in\Omega$),
and the nonnegative series $\sum_{n=1}^{\infty}c_n$ is required to converge.
The sum \eqref{eq:positive-capital} can take value $\infty$.
Since $\K^{G_n,c_n}_0(o)=c_n$ does not depend on $o$,
$\mathfrak{S}_0(o)$ does not depend on $o$ either
and will sometimes be abbreviated to $\mathfrak{S}_0$.

The \emph{outer measure} of a set $E\subseteq\Omega$ (not necessarily $E\in\FFF$)
is defined as
\begin{equation}\label{eq:upper-probability}
  \UpProb(E)
  :=
  \inf
  \bigl\{
    \mathfrak{S}_0
    \bigm|
    \forall o\in\Omega:
    \liminf_{t\to\infty}
    \mathfrak{S}_t(o)
    \ge
    \III_E(o)
  \bigr\},
\end{equation}
where $\mathfrak{S}$ ranges over the nonnegative capital processes
and $\III_E$ stands for the indicator function of $E$.
It is clear that $\UpProb(E)$ will not change if we replace the $\liminf$ in~\eqref{eq:upper-probability} by $\limsup$.
The set $E$ is \emph{null} if $\UpProb(E)=0$.
This condition is equivalent to the existence of a nonnegative capital process $\mathfrak{S}$
such that $\mathfrak{S}_0=1$ and, on the event $E$, $\lim_{t\to\infty}\mathfrak{S}_t=\infty$
(see, e.g., \cite{\CTIV}, Section~2).
The equivalence will still hold if we replace the $\lim_{t\to\infty}$ by $\limsup_{t\to\infty}$.
We will say that such $\mathfrak{S}$ \emph{witnesses} that $E$ is null.
A property of $o\in\Omega$ will be said to hold \emph{almost surely} (a.s.)\
if the set of $o$ where it fails is null.

\begin{remark}
  The definition \eqref{eq:upper-probability} is less popular
  than its modification proposed in \cite{\PerkowskiPromelIntegrals}
  (the latter has been also used in, e.g., \cite{\PerkowskiPromelLocal}, \cite{Beiglbock/etal:2015},
  \cite{Lochowski:2015LMJ-arXiv}, and \cite{Lochowski:2015}).
  Our rationale for the choice of the original definition \eqref{eq:upper-probability}
  is that it is more conservative and, therefore, makes our results stronger.
  Its financial interpretation is that $E$ is null
  if Sceptic can become infinitely rich
  splitting an initial capital of only one monetary unit
  into a countable number of accounts
  and running a simple trading strategy on each account
  making sure that no account ever goes into debt.
\end{remark}

Now we have all we need to define the It\^o integral $\int_0^t\phi\dd\omega$.
First we define a sequence of stopping times $T^n_k$, $k=0,1,2,\ldots$, inductively
by $T^n_{0}(o):=0$, where $o=(\omega,\phi)$, and
\begin{equation}\label{eq:T}
  T^n_k(o)
  :=
  \inf
  \Bigl\{
    t>T^n_{k-1}(o)
    \st
    \left|\omega(t)-\omega(T^n_{k-1})\right| \vee \left|\phi(t)-\phi(T^n_{k-1})\right| = 2^{-n}
  \Bigr\}
\end{equation}
for $k=1,2,\ldots$ (as usual, $\inf\emptyset:=\infty$);
we do this for each $n=1,2,\ldots$\,.
We let $T^n(o)$ stand for the \emph{$n$th partition}, i.e., the set
$$
  T^n(o)
  :=
  \left\{
    T^n_k(o)
    \st
    k=0,1,\ldots
  \right\}.
$$
Notice that the nestedness of the partitions,
$T^1\subseteq T^2\subseteq\cdots$,
is neither required nor implied by our definition.

\begin{remark}
  The definition of the sequence \eqref{eq:T} is different from the one in \cite{\CTIV}, Section~5,
  in that it uses not only the values of $\omega$ but also those of $\phi$.
  In this respect it is reminiscent of the definitions in \cite{Bichteler:1981} (Theorem~7.14) and \cite{Karandikar:1995},
  where similar sequences of stopping times depend only on the values of $\phi$.
\end{remark}

For all $t\in[0,\infty)$, $\phi\in C[0,\infty)$, and $\omega\in C[0,\infty)$, define
\begin{equation}\label{eq:integral}
  (\phi\cdot\omega)^n_t
  :=
  \sum_{k=1}^{\infty}
  \phi(T^n_{k-1}\wedge t)
  \Bigl(
    \omega(T^n_{k}\wedge t)
    -
    \omega(T^n_{k-1}\wedge t)
  \Bigr),
  \quad
  n=1,2,\ldots\,.
\end{equation}

\begin{theorem}\label{thm:continuous}
  For each $t>0$,
  $(\phi\cdot\omega)^n_s$
  converges uniformly over $s\in[0,t]$ almost surely
  as $n\to\infty$.
\end{theorem}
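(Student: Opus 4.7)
The plan is to show that $\{(\phi\cdot\omega)^n\}$ is uniformly Cauchy on $[0,t]$ almost surely by controlling the supremum of the difference $D^n_s := (\phi\cdot\omega)^{n+1}_s - (\phi\cdot\omega)^n_s$ and then assembling all the estimates into one nonnegative capital process via a game-theoretic Borel--Cantelli-style argument.

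First I would realize each $D^n$ as a simple capital process with initial capital $0$. Ordering $T^n \cup T^{n+1}$ as a single increasing sequence of stopping times and telescoping, the bet of this difference strategy at each $\sigma \in T^n \cup T^{n+1}$ equals $\phi(\tau^{n+1}(\sigma)) - \phi(\tau^n(\sigma))$, where $\tau^n(\sigma)$ (resp.\ $\tau^{n+1}(\sigma)$) denotes the greatest element of $T^n$ (resp.\ $T^{n+1}$) strictly less than $\sigma$. A short case analysis, depending on which of $\tau^n(\sigma), \tau^{n+1}(\sigma)$ is larger, combined with the definition~\eqref{eq:T} and the continuity of $\phi$, shows that every such bet is bounded in absolute value by $2^{-n}$; similarly, consecutive $\omega$-increments over the combined partition are bounded by $2\cdot2^{-n}$.

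Next I would apply the self-normalized game-theoretic supermartingale from Appendix~\ref{sec:A} to $D^n$ in order to produce, for each threshold $\epsilon>0$, a nonnegative capital process whose initial capital is of order $\epsilon^{-2}\sum(\text{bet})^2(\Delta\omega)^2$ and which dominates $\III_{\{\sup_{s\le t}|D^n_s|>\epsilon\}}$ at infinity. Using the bet and increment bounds above, $\sum(\text{bet})^2(\Delta\omega)^2 \le 4^{-n}\sum(\Delta\omega)^2$, and the classical-style martingale from Appendix~\ref{sec:B} allows one to bound $\sum(\Delta\omega)^2$ over $T^n\cup T^{n+1}$ within $[0,t]$ by an a.s.\ finite random constant, uniformly in $n$ (this is essentially the statement that continuous price paths have finite quadratic variation). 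Choosing for instance $\epsilon_n := 2^{-n/2}$ then gives an initial capital of order $2^{-n}$, which is summable in $n$. Summing all these capital processes yields a single nonnegative capital process, with finite initial capital, that witnesses that almost surely $\sup_{s\le t}|D^n_s|\le\epsilon_n$ for all sufficiently large $n$. The Cauchy criterion then gives uniform convergence of $(\phi\cdot\omega)^n_s$ on $[0,t]$.

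The main obstacle is calibrating the two game-theoretic tools against one another: the self-normalized bound from Appendix~\ref{sec:A} must be genuinely sharp in the squared $\omega$-increments, so that Appendix~\ref{sec:B} can absorb the dependence on the number of partition points into an a.s.\ finite random constant. A secondary subtlety is that $T^n$ and $T^{n+1}$ are not nested, so one must check carefully that passing to the combined partition does not inflate either the bet sizes or the $\omega$-increments beyond $O(2^{-n})$; this is precisely where it matters that the definition~\eqref{eq:T} bounds the $\omega$- and $\phi$-moves \emph{simultaneously} rather than only the $\omega$-moves.
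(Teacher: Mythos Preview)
Your plan mirrors the paper's argument closely: merge $T^n$ and $T^{n+1}$ into one increasing sequence (the paper uses $T^{n-1}$ and $T^n$), realize the difference $D^n$ as a simple capital process with bets bounded by $O(2^{-n})$, control the quadratic term with the K29 martingale of Appendix~\ref{sec:B}, feed this into the exponential supermartingale of Appendix~\ref{sec:A}, and finish with a Borel--Cantelli style summation (the paper's Lemma~\ref{lem:O}). Your remark that the non-nestedness is handled precisely because~\eqref{eq:T} bounds $\omega$- and $\phi$-moves simultaneously is exactly right.

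There is, however, a real gap in the way you invoke Appendix~\ref{sec:A}. The supermartingale there is $\prod_k\exp(x_k-x_k^2)$ with \emph{deterministic} initial value~$1$; it does not yield a nonnegative capital process whose initial value is ``of order $\epsilon^{-2}\sum(\text{bet})^2(\Delta\omega)^2$''. The latter quantity is random, and a random number cannot serve as an initial capital in this framework. The companion claim---that Appendix~\ref{sec:B} alone gives $\sum(\Delta\omega)^2\le C$ over $T^n\cup T^{n+1}$ for a single random~$C$ uniform in~$n$---is essentially the existence of quadratic variation, which the paper proves separately (Section~\ref{sec:variation}) and does \emph{not} use in the proof of Theorem~\ref{thm:continuous}. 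Even granting it, you would still need an extra localization in the level sets $\{C\le m\}$ before you could sum deterministic initial capitals, and you do not mention this.

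The paper avoids both issues by fixing the multiplier $b_n:=n^2$ \emph{in advance}, so that the Appendix~\ref{sec:A} process genuinely starts from~$1$, and then using the K29 martingale not to bound $\sum(\Delta\omega)^2$ uniformly but to establish the much weaker fact
\[
  \sup_{s\le t}\sum_k x_{k,s}^2
  \;\le\;
  n^4\,2^{-2n+2}\sup_{s\le t}\sum_k\bigl(\omega(a_k\wedge s)-\omega(a_{k-1}\wedge s)\bigr)^2
  \;=\;o(1)\quad\text{a.s.}
\]
(via a simple capital process with initial value $n^{-3}$, stopped to be nonnegative, and Lemma~\ref{lem:O}). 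This renders the quadratic term in the exponential harmless and gives $\sup_{s\le t}|D^n_s|=O\bigl((\log n)/n^{2}\bigr)$, which is summable. If you rewire your sketch to choose the scale $b_n$ first and aim only for $\sum x_{k,s}^2=o(1)$ rather than a uniform quadratic-variation bound, your outline becomes the paper's proof.
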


The limit whose existence is asserted in Theorem~\ref{thm:continuous} will be denoted $(\phi\cdot\omega)_s$
or $\int_0^s\phi\dd\omega$ and called the It\^o integral of $\phi$ by $\omega$.
Since the convergence is uniform over $[0,t]$ for each $t$,
$(\phi\cdot\omega)_s$ is a continuous function of $s\in[0,\infty)$,
almost surely.

\section{Proof of Theorem~\protect\ref{thm:continuous}}
\label{sec:proof}

Let us first check the following basic property of the stopping times $T^n_k$
(which will allow us to use these stopping times as components of simple trading strategies).

\begin{lemma}\label{lem:basic}
  For each $n$, $T^n_k\to\infty$ as $k\to\infty$.
\end{lemma}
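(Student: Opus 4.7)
The plan is a short proof by contradiction using the uniform continuity of $\omega$ and $\phi$ on compact intervals. First I would dispense with the trivial case: if some $T^n_k(o)=\infty$, then, because $\inf\emptyset=\infty$, all later $T^n_j$ are also infinite, so the claim is immediate. Thus I may assume all $T^n_k$ are finite, and the sequence $(T^n_k)_{k\ge 0}$ is strictly increasing (at $t=T^n_{k-1}$ the quantity $|\omega(t)-\omega(T^n_{k-1})|\vee|\phi(t)-\phi(T^n_{k-1})|$ is zero, not $2^{-n}$). Being nondecreasing, it either tends to $\infty$ (the desired conclusion) or converges to some finite limit $T^*$.

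Next I would record the key feature of the definition \eqref{eq:T} that follows from continuity: if $T^n_k$ is finite, then
\begin{equation*}
  \left|\omega(T^n_k)-\omega(T^n_{k-1})\right|
  \vee
  \left|\phi(T^n_k)-\phi(T^n_{k-1})\right|
  =
  2^{-n}.
\end{equation*}
Indeed, since $\omega,\phi\in C[0,\infty)$, the map $t\mapsto|\omega(t)-\omega(T^n_{k-1})|\vee|\phi(t)-\phi(T^n_{k-1})|$ is continuous; it equals $0$ at $T^n_{k-1}$ and reaches value $2^{-n}$ along a sequence $t_i\downarrow T^n_k$, so by continuity it also equals $2^{-n}$ at $T^n_k$.

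The contradiction is then immediate. Assume $T^n_k\uparrow T^*<\infty$. By continuity of $\omega$ and $\phi$ on the compact interval $[0,T^*]$, both functions are uniformly continuous there, so there is $\delta>0$ such that $s,t\in[0,T^*]$ with $|s-t|<\delta$ implies both $|\omega(s)-\omega(t)|<2^{-n}$ and $|\phi(s)-\phi(t)|<2^{-n}$. Choose $K$ large enough that $T^*-T^n_K<\delta$; then $T^n_K<T^n_{K+1}\le T^*$, so $T^n_{K+1}-T^n_K<\delta$, which forces
\begin{equation*}
  \left|\omega(T^n_{K+1})-\omega(T^n_K)\right|
  \vee
  \left|\phi(T^n_{K+1})-\phi(T^n_K)\right|
  <
  2^{-n},
\end{equation*}
contradicting the displayed equality above. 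There is no serious obstacle: the only minor point requiring care is justifying that the defining infimum is attained with equality (which I address via the continuity of the max-difference function, as noted in the second paragraph).
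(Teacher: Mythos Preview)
Your proof is correct. Both your argument and the paper's rest on the same underlying fact---continuity of $\omega$ and $\phi$ together with compactness of a bounded interval---but they package it differently. The paper fixes $n$ and $t$, notes that every $s\in[0,t]$ has an open neighbourhood on which both $\omega$ and $\phi$ oscillate by less than $2^{-n}$, extracts a finite subcover by compactness, and observes that each such neighbourhood can contain at most one $T^n_k$; hence only finitely many $T^n_k$ lie in $[0,t]$. You instead argue by contradiction via uniform continuity: assuming $T^n_k\uparrow T^*<\infty$, uniform continuity on $[0,T^*]$ yields a $\delta$ that forces consecutive stopping times to be too close to realise the required jump of size $2^{-n}$. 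Your route is slightly more explicit in verifying that the infimum in \eqref{eq:T} is attained with equality (a point the paper leaves implicit), while the paper's finite-cover formulation has the minor advantage of adapting more directly to the c\`adl\`ag case treated later, where one counts a bounded number of stopping times per neighbourhood rather than exactly one.
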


\begin{proof}
  Let us fix $n$ and $t$ and show that $T^n_k>t$ for some $k$.
  Each $s\in[0,t]$ has a neighbourhood in which $\omega$ and $\phi$ change by less than $2^{-n}$.
  By the compactness of the interval $[0,t]$ we can choose a finite cover of this interval
  consisting of such neighbourhoods,
  and each such neighbourhood contains at most one $T^n_k$.
\end{proof}

By the definition of ``almost surely'' (and the remarks preceding it),
any nonnegative capital process $\K$ with $\K_0<\infty$ satisfies $\sup_{t\in[0,\infty)}\K_t<\infty$ almost surely.
The following lemma generalizes this to sequences of nonnegative capital processes.

\begin{lemma}\label{lem:O}
  For any sequence $\K^n$, $n=1,2,\ldots$, of nonnegative capital processes satisfying $\K^n_0\le1$,
  we have $\sup_{t\in[0,\infty)}\K^n_t=O(n^2)$ as $n\to\infty$ a.s.
\end{lemma}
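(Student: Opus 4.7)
The plan is a standard dovetailing argument: combine the given sequence $\K^n$ into a single nonnegative capital process with finite initial capital, and exploit the fact (stated just before the lemma) that any such process is almost surely bounded on $[0,\infty)$.

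First I would set
\begin{equation*}
  \mathfrak{S}_t
  :=
  \sum_{n=1}^{\infty}
  \frac{1}{n^2}
  \K^n_t,
\end{equation*}
and check that $\mathfrak{S}$ is itself a nonnegative capital process in the sense of~\eqref{eq:positive-capital}. This is a routine reshuffling: writing each $\K^n$ as $\sum_{m}\K^{G_{n,m},c_{n,m}}$ with $\sum_{m}c_{n,m}\le1$ and rescaling each simple strategy by $1/n^2$, the double series becomes a single series of nonnegative simple capital processes whose total initial capital is at most $\sum_{n}n^{-2}=\pi^2/6<\infty$. Thus $\mathfrak{S}$ is a well-defined nonnegative capital process with $\mathfrak{S}_0\le\pi^2/6$.

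Second, I would apply the observation preceding the lemma: since $\mathfrak{S}_0<\infty$, we have $\sup_{t\in[0,\infty)}\mathfrak{S}_t<\infty$ a.s. Because every summand is nonnegative, the pointwise bound
\begin{equation*}
  \frac{1}{n^2}
  \sup_{t\in[0,\infty)}
  \K^n_t
  \le
  \sup_{t\in[0,\infty)}
  \mathfrak{S}_t
\end{equation*}
holds for every $n$ simultaneously, so on the almost sure event that the right-hand side is finite we obtain $\sup_{t}\K^n_t\le n^2\sup_{t}\mathfrak{S}_t = O(n^2)$ as $n\to\infty$, which is exactly the claim.

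The main obstacle, such as it is, is the bookkeeping in the first step: verifying that the countable sum of countable sums can be re-indexed as a single nonnegative capital process under the definition \eqref{eq:positive-capital}. Everything after that is essentially automatic from the stated property of nonnegative capital processes with finite initial capital.
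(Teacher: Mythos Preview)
Your argument is correct and essentially identical to the paper's: both form $\mathfrak{S}_t=\sum_n n^{-2}\K^n_t$, note it is a nonnegative capital process with finite initial capital, use the almost-sure boundedness of such processes, and then extract the bound $n^{-2}\sup_t\K^n_t\le\sup_t\mathfrak{S}_t$ from the nonnegativity of the summands. You spell out the re-indexing of the double series more explicitly than the paper does, but the approach is the same.
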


\begin{proof}
  Since $\sum_n n^{-2}\K^n$ is a nonnegative capital process with a finite initial value,
  we have
  $$
    \sup_{t}\sum_n n^{-2}\K^n_t<\infty \quad \text{a.s.},
  $$
  which implies
  $$
    \sup_{t}\sup_n n^{-2}\K^n_t<\infty \quad \text{a.s.},
  $$
  which in turn implies
  $$
    \sup_n n^{-2}\sup_{t}\K^n_t<\infty \quad \text{a.s.},
  $$
  which can be rewritten as $\sup_{t}\K^n_t=O(n^2)$ a.s.
\end{proof}

The value of $t$ will be fixed throughout the rest of this section.
It suffices to prove that the sequence of functions
$(\phi\cdot\omega)^n_s$ on the interval $s\in[0,t]$ is Cauchy (in the uniform metric) almost surely.

Let us arrange the stopping times $T^n_0,T^n_1,T^n_2,\ldots$
and $T^{n-1}_0,T^{n-1}_1,T^{n-1}_2,\ldots$ into one strictly increasing sequence
(removing duplicates if needed) $a_k$, $k=0,1,\ldots$:
$0=a_0<a_1<a_2<\cdots$,
each $a_k$ is equal to one of the $T^n_k$ or one of the $T^{n-1}_k$,
each $T^n_k$ is among the $a_k$,
and each $T^{n-1}_k$ is among the $a_k$.
Let us apply the strategy leading to the supermartingale~\eqref{eq:super-1}
(eventually we will be interested in~\eqref{eq:super-2}) to
\begin{equation}\label{eq:x}
  \begin{aligned}
    x_k
    &:=
    b_n
    \left(
      \left(
        (\phi\cdot\omega)^n_{a_k}
        -
        (\phi\cdot\omega)^n_{a_{k-1}}
      \right)
      -
      \left(
        (\phi\cdot\omega)^{n-1}_{a_k}
        -
        (\phi\cdot\omega)^{n-1}_{a_{k-1}}
      \right)
    \right)\\
    &=
    b_n
    \Bigl(
      \phi(a'_{k-1})
      \left(
        \omega(a_k)
        -
        \omega(a_{k-1})
      \right)
      -
      \phi(a''_{k-1})
      \left(
        \omega(a_k)
        -
        \omega(a_{k-1})
      \right)
    \Bigr)\\
    &=
    b_n
    \left(
      \phi(a'_{k-1})
      -
      \phi(a''_{k-1})
    \right)
    \left(
      \omega(a_k)
      -
      \omega(a_{k-1})
    \right),
  \end{aligned}
\end{equation}
where
$b_n:=n^{2}$
(the rationale for this choice will become clear later),
$a'_{k-1}:=T^n_{k'}$ with $k'$ being the largest integer such that $T^{n}_{k'}\le a_{k-1}$,
and $a''_{k-1}:=T^{n-1}_{k''}$ with $k''$ being the largest integer such that $T^{n-1}_{k''}\le a_{k-1}$.
(Notice that either $a'_{k-1}=a_{k-1}$ or $a''_{k-1}=a_{k-1}$.)
Informally, we consider the simple capital process $\K^n$ with starting capital 1
corresponding to betting $\K^n_{a_k}$ on $x_k$ at each time $a_k$, $k=0,1,\ldots$\,.
Formally, the bet (on $\omega$) at time $a_k$ is
$
  \K^n_{a_k}
  b_n
  \left(
    \phi(a'_{k})
    -
    \phi(a''_{k})
  \right)
$.

We often do not reflect $n$ in our notation (such as $a_k$ and $x_k$),
but this should not lead to ambiguities.

The condition of Lemma~\ref{lem:super} is satisfied as
\begin{equation}\label{eq:bound-1}
  \left|x_k\right|
  \le
  b_n 2^{-n+1} 2^{-n}
  \le
  0.5,
\end{equation}
where the last inequality
(ensuring that~\eqref{eq:super-1} and~\eqref{eq:super-2} are really supermartingales)
is true for all $n\ge1$.
By Lemma~\ref{lem:super}, we will have
\begin{equation*}
  \K^n_{a_K}
  \ge
  \prod_{k=1}^K
  \exp(x_k-x_k^2),
  \quad
  K=0,1,\ldots\,.
\end{equation*}
The proof of Lemma~\ref{lem:super} shows that, in addition, we will also have
\begin{equation*}
  \K^n_s
  \ge
  \K^n_{a_{k-1}}
  \exp(x_{k,s}-x_{k,s}^2),
  \quad
  k=1,2,\ldots,
  \quad
  s\in[a_{k-1},a_{k}],
\end{equation*}
where
\begin{equation}\label{eq:x-s}
  \begin{aligned}
    x_{k,s}
    &:=
    b_n
    \left(
      \left(
        (\phi\cdot\omega)^n_s
        -
        (\phi\cdot\omega)^n_{a_{k-1}}
      \right)
      -
      \left(
        (\phi\cdot\omega)^{n-1}_s
        -
        (\phi\cdot\omega)^{n-1}_{a_{k-1}}
      \right)
    \right)\\
    &=
    b_n
    \left(
      \phi(a'_{k-1})
      -
      \phi(a''_{k-1})
    \right)
    \left(
      \omega(s)
      -
      \omega(a_{k-1})
    \right)
  \end{aligned}
\end{equation}
(cf.\ \eqref{eq:x};
notice that \eqref{eq:bound-1} remains true for $x_{k,s}$ in place of $x_k$).
This simple capital process $\K^n$ is obviously nonnegative.

To cover both \eqref{eq:x} and \eqref{eq:x-s}, we modify \eqref{eq:x-s} to
\begin{equation}\label{eq:x-ks}
  x_{k,s}
  :=
  b_n
  \left(
    \phi(a'_{k-1})
    -
    \phi(a''_{k-1})
  \right)
  \left(
    \omega(a_k \wedge s)
    -
    \omega(a_{k-1} \wedge s)
  \right).
\end{equation}
We have a nonnegative capital process $\K^n$ that starts from $1$
and whose value at time $s$ is at least
\begin{equation}\label{eq:final-1}
  \exp
  \left(
    b_n
    \left(
      (\phi\cdot\omega)^n_{s}
      -
      (\phi\cdot\omega)^{n-1}_{s}
    \right)
    -
    \sum_{k=1}^{\infty} x_{k,s}^2
  \right).
\end{equation}

Let us show that
\begin{equation}\label{eq:square}
  \sup_{s\in[0,t]}
  \sum_{k=1}^{\infty} x_{k,s}^2=o(1)
\end{equation}
as $n\to\infty$ almost surely.
It suffices to show that
\begin{equation}\label{eq:sup}
  \sup_{s\in[0,t]}
  \sum_{k=1}^{\infty}
  \left(
    n^2 2^{-n+1}
    \left(
      \omega(a_k\wedge s)-\omega(a_{k-1}\wedge s)
    \right)
  \right)^2
  =
  o(1)
\end{equation}
almost surely.
Using the trading strategy leading to the K29 martingale \eqref{eq:K29},
we obtain the simple capital process
\begin{align}
    \tilde\K^n_s
    &=
    n^{-3}
    +
    \sum_{k=1}^{\infty}
    \left(
      n^2 2^{-n+1}
      \left(
        \omega(a_k\wedge s)-\omega(a_{k-1}\wedge s)
      \right)
    \right)^2
    \notag\\
    &\qquad-
    \left(
      \sum_{k=1}^{\infty}
      n^2 2^{-n+1}
      \left(
        \omega(a_k\wedge s)-\omega(a_{k-1}\wedge s)
      \right)
    \right)^2
    \notag\\
    &=
    n^{-3}
    +
    \sum_{k=1}^{\infty}
    n^4 2^{-2n+2}
    \left(
      \omega(a_k\wedge s)-\omega(a_{k-1}\wedge s)
    \right)^2
    \label{eq:sum}\\
    &\qquad-
    n^4 2^{-2n+2}
    \left(
      \omega(s)-\omega(0)
    \right)^2.
    \notag
\end{align}
Formally, this simple capital process corresponds to the initial capital $\tilde\K^n_0=n^{-3}$
and betting $-2 n^4 2^{-2n+2} (\omega(a_k)-\omega(0))$ at time $a_k$, $k=1,2,\ldots$
(cf.~\eqref{eq:increment} on p.~\pageref{eq:increment}).
Let us make this simple capital process nonnegative
by stopping trading at the first moment $s$ when $n^4 2^{-2n+2}\left(\omega(s)-\omega(0)\right)^2$ reaches $n^{-3}$
(which will not happen before time $t$ for sufficiently large $n$);
notice that this will make $\tilde\K^n$ nonnegative
even if the addend $\sum_{k=1}^{\infty}(\cdots)^2$ in~\eqref{eq:sum} is ignored.
Since $\tilde\K^n$ is a nonnegative capital process with initial value $n^{-3}$,
applying Lemma~\ref{lem:O} to $n^3\tilde\K^n$ gives $\sup_{s\le t}\tilde\K^n_s=O(n^{-1})=o(1)$ a.s.
Therefore, the sum $\sum_{k=1}^{\infty}(\cdots)^2$ in~\eqref{eq:sum} is $o(1)$ a.s.,
which completes the proof of~\eqref{eq:square}.

In combination with \eqref{eq:square}, \eqref{eq:final-1} implies
\begin{equation*}
  \K^n_s
  \ge
  \exp
  \Bigl(
    b_n
    \left(
      (\phi\cdot\omega)^n_s
      -
      (\phi\cdot\omega)^{n-1}_s
    \right)
    -
    1
  \Bigr)
\end{equation*}
for all $s\le t$ from some $n$ on almost surely.
Applying the strategy leading to the supermartingale~\eqref{eq:super-1} to $-x_{k,s}$ in place of $x_{k,s}$
and averaging the resulting simple capital processes (as in \eqref{eq:super-2}),
we obtain a simple capital process $\bar\K^n$ satisfying
\begin{equation}\label{eq:final-2}
  \bar\K^n_s
  \ge
  \frac12
  \exp
  \Bigl(
    b_n
    \left|
      (\phi\cdot\omega)^n_s
      -
      (\phi\cdot\omega)^{n-1}_s
    \right|
    -
    1
  \Bigr)
\end{equation}
for all $s\le t$ from some $n$ on almost surely.

By the definition of $\bar\K^n$ and Lemma~\ref{lem:O}, we obtain that
\begin{equation*}
  \sup_{s\in[0,t]}
  \frac12
  \exp
  \left(
    n^{2}
    \left|
      (\phi\cdot\omega)^n_s
      -
      (\phi\cdot\omega)^{n-1}_s
    \right|
    -
    1
  \right)
  =
  O(n^2)
\end{equation*}
almost surely.
The last inequality implies
\begin{equation*}
  \sup_{s\in[0,t]}
  \left|
    (\phi\cdot\omega)^n_s
    -
    (\phi\cdot\omega)^{n-1}_s
  \right|
  =
  O
  \left(
    \frac{\log n}{n^{2}}
  \right).
\end{equation*}
Since the series $\sum_n(\log n)n^{-2}$ converges,
we have the almost sure uniform convergence of $(\phi\cdot\omega)^n_s$ over $s\in[0,t]$
as $n\to\infty$.

\section{Quadratic variation}
\label{sec:variation}

In this section we will show that the quadratic variation of $\omega$ along $T^n_k$ exists.
This was shown in, e.g., \cite{\CTIV} and \cite{\CTVI},
but for a different sequence of partitions.

Define (essentially following \cite{\CTIV}, Section~5)
\begin{equation*}
  A^n_t(o)
  :=
  \sum_{k=1}^{\infty}
  \left(
    \omega(T^n_k\wedge t)
    -
    \omega(T^n_{k-1}\wedge t)
  \right)^2,
  \quad
  n=1,2,\ldots,
\end{equation*}
for $o=(\omega,\phi)$.

\begin{lemma}\label{lem:QV}
  For each $t\ge0$, the sequence of functions $A^n:s\in[0,t]\mapsto A^n_s$ converges as $n\to\infty$ uniformly
  almost surely.
\end{lemma}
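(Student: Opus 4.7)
The plan is to mimic the proof of Theorem~\ref{thm:continuous} almost verbatim, after reducing $A^n_s$ to a ``Riemann sum in $\omega$ alone'' via a simple algebraic identity. Expanding $(b-a)^2 = b^2 - a^2 - 2a(b-a)$ with $a := \omega(T^n_{k-1}\wedge s)$ and $b := \omega(T^n_k\wedge s)$ and summing over $k$, I would obtain the telescoping identity
\begin{equation*}
  A^n_s = \omega(s)^2 - \omega(0)^2 - 2\,(\omega\cdot\omega)^n_s,
\end{equation*}
where
\begin{equation*}
  (\omega\cdot\omega)^n_s := \sum_{k=1}^{\infty} \omega(T^n_{k-1}\wedge s)\bigl(\omega(T^n_k\wedge s) - \omega(T^n_{k-1}\wedge s)\bigr).
\end{equation*}
Thus it suffices to show that $(\omega\cdot\omega)^n_s$ converges uniformly over $s\in[0,t]$ almost surely; this is exactly the expression~(\ref{eq:integral}) with the integrand $\phi$ formally replaced by $\omega$, but using the same sample-based partition~$T^n$ from~(\ref{eq:T}).

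I would therefore reuse the argument of Section~\ref{sec:proof} with $\phi$ replaced by $\omega$ throughout. The merged partition $\{a_k\}$ and the scale $b_n := n^2$ are unchanged; the new ``bets'' are
$x_{k,s} := b_n\bigl(\omega(a'_{k-1}) - \omega(a''_{k-1})\bigr)\bigl(\omega(a_k\wedge s) - \omega(a_{k-1}\wedge s)\bigr)$, with $a'_{k-1}$ and $a''_{k-1}$ as in Section~\ref{sec:proof}. The key bound $|x_{k,s}|\le 0.5$ still holds, because the definition~(\ref{eq:T}) of $T^n$ caps the oscillation of $\omega$ (not just of $\phi$) by $2^{-n}$ on each $T^n$-interval and by $2^{-n+1}$ on each $T^{n-1}$-interval: hence $|\omega(a_{k-1}) - \omega(a'_{k-1})| \le 2^{-n}$, $|\omega(a_{k-1}) - \omega(a''_{k-1})| \le 2^{-n+1}$, and $|\omega(a_k)-\omega(a_{k-1})| \le 2^{-n}$, giving $|x_{k,s}| \le 3 n^2 2^{-2n} \le 0.5$ for all sufficiently large~$n$. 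Crucially, the K29 bound~(\ref{eq:sup}) only involves $\omega$ and the same merged partition, so it transfers without change and yields $\sum_k x_{k,s}^2 = o(1)$ uniformly in $s\in[0,t]$ almost surely.

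Combining Lemma~\ref{lem:super} with its mirror-image version for $-x_{k,s}$ and then Lemma~\ref{lem:O}, exactly as in the final displays of Section~\ref{sec:proof}, I would conclude
\begin{equation*}
  \sup_{s\in[0,t]}\bigl|(\omega\cdot\omega)^n_s - (\omega\cdot\omega)^{n-1}_s\bigr| = O\!\left(\frac{\log n}{n^2}\right)
\end{equation*}
almost surely. Since $\sum_n (\log n)/n^2 < \infty$, the sequence $(\omega\cdot\omega)^n_s$ is uniformly Cauchy on $[0,t]$ a.s., and via the telescoping identity above so is $A^n_s$. The only point requiring care, and thus the ``main obstacle'' such as it is, is bookkeeping: one must verify that the proof of Theorem~\ref{thm:continuous} nowhere exploited a distinction between the integrand and the integrator, so that swapping $\phi$ for $\omega$ preserves every estimate. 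Because every inequality used on $\phi$-increments in Section~\ref{sec:proof} invoked only the oscillation control built into the stopping times $T^n$, and the same control applies symmetrically to~$\omega$, this substitution is mechanical.
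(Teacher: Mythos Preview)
Your proposal is correct and essentially coincides with the paper's proof. The paper works directly with the increments of $A^n-A^{n-1}$ and computes algebraically that
\[
  \bigl(A^n_{a_k}-A^n_{a_{k-1}}\bigr)-\bigl(A^{n-1}_{a_k}-A^{n-1}_{a_{k-1}}\bigr)
  = 2\bigl(\omega(a''_{k-1})-\omega(a'_{k-1})\bigr)\bigl(\omega(a_k)-\omega(a_{k-1})\bigr),
\]
which is exactly $-2$ times your $x_{k,s}/b_n$ for $(\omega\cdot\omega)^n-(\omega\cdot\omega)^{n-1}$; your telescoping identity $A^n_s=\omega(s)^2-\omega(0)^2-2(\omega\cdot\omega)^n_s$ just performs this same algebra up front rather than on the increments, and the remainder of the argument (supermartingale of Lemma~\ref{lem:super}, the K29 bound~\eqref{eq:square}, Lemma~\ref{lem:O}) is identical. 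One cosmetic point: since either $a'_{k-1}=a_{k-1}$ or $a''_{k-1}=a_{k-1}$, the triangle inequality is unnecessary and $|\omega(a'_{k-1})-\omega(a''_{k-1})|\le 2^{-n+1}$ directly, matching the paper's constant.
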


We will use the notation $A_s(o)$ for the limit (when it exists)
and will call it the \emph{quadratic variation of $\omega$ at $s$}.
We will also use the notation $A(o)$
for the \emph{quadratic variation} $s\ge0\mapsto A_s(o)$ of the price path $\omega$.

\begin{proof}[Proof of Lemma~\ref{lem:QV}]
  The proof will be modelled on that of Theorem~\ref{thm:continuous} in Section~\ref{sec:proof}
  (but will be simpler);
  we start from fixing the value of $t$.
  Let us check that the sequence $A^n|_{[0,t]}$
  is Cauchy in the uniform metric almost surely.

  Let us apply the supermartingale~\eqref{eq:super-1} to
  \begin{align*}
    x_k
    &:=
    b_n
    \left(
      \left(
        A^n_{a_k}(o)
        -
        A^n_{a_{k-1}}(o)
      \right)
      -
      \left(
        A^{n-1}_{a_k}(o)
        -
        A^{n-1}_{a_{k-1}}(o)
      \right)
    \right)\\
    &=
    b_n
    \Bigl(
      \left(
        \omega(a_k)
        -
        \omega(a'_{k-1})
      \right)^2
      -
      \left(
        \omega(a_{k-1})
        -
        \omega(a'_{k-1})
      \right)^2\\
      &\qquad-
      \left(
        \omega(a_k)
        -
        \omega(a''_{k-1})
      \right)^2
      +
      \left(
        \omega(a_{k-1})
        -
        \omega(a''_{k-1})
      \right)^2
    \Bigr)\\
    &=
    b_n
    \Bigl(
      -2 \omega(a_k) \omega(a'_{k-1})
      +
      2 \omega(a_{k-1}) \omega(a'_{k-1})\\
      &\qquad+
      2 \omega(a_k) \omega(a''_{k-1})
      -
      2 \omega(a_{k-1}) \omega(a''_{k-1})
    \Bigr)\\
    &=
    2b_n
    \left(
      \omega(a''_{k-1}) - \omega(a'_{k-1})
    \right)
    \left(
      \omega(a_k) - \omega(a_{k-1})
    \right)
  \end{align*}
  and to $-x_k$,
  where $a'_{k-1}$, $a''_{k-1}$, and $b_n$ are defined as before
  and we are interested only in $n\ge4$.
  Instead of the bound \eqref{eq:bound-1} we now have
  \begin{equation*}
    \left|x_k\right|
    \le
    2b_n 2^{-n+1} 2^{-n}
    =
    b_n 2^{-2n+2}
    \le
    0.5
  \end{equation*}
  (the last inequality depending on our assumption $n\ge4$).
  The analogue of \eqref{eq:final-2} is
  \begin{equation*}
    \bar\K^n_s
    \ge
    \frac12
    \exp
    \Bigl(
      b_n
      \left|
        A^n_s(o)
        -
        A^{n-1}_s(o)
      \right|
      -
      1
    \Bigr),
  \end{equation*}
  and so we have
  \begin{equation*}
    \sup_{s\in[0,t]}
    \frac12
    \exp
    \left(
      n^{2}
      \left|
        A^n_s
        -
        A^{n-1}_s
      \right|
      -
      1
    \right)
    =
    O(n^2)
  \end{equation*}
  almost surely.
  This implies
  \begin{equation*}
    \sup_{s\in[0,t]}
    \left|
      A^n_s
      -
      A^{n-1}_s
    \right|
    =
    O
    \left(
      \frac{\log n}{n^{2}}
    \right)
  \end{equation*}
  and thus the almost sure uniform convergence of $A^n_s$ over $s\in[0,t]$ as $n\to\infty$.
\end{proof}

\section{It\^o's formula}
\label{sec:Ito-lemma}

In this section we state a version of It\^o's formula which shows that our definition of It\^o integral
agrees with that of F\"ollmer \cite{Follmer:1981}
(when the latter is specialized to the continuous case and our sequence of partitions).

\begin{theorem}\label{thm:Ito}
  Let $F:\mathbb{R}\to\mathbb{R}$ be a function of class $C^2$.
  Then, for all $t\ge0$,
  \begin{equation}\label{eq:Ito}
    F(\omega(t))
    =
    F(\omega(0))
    +
    \int_0^t
    F'(\omega)
    \dd\omega
    +
    \frac12
    \int_0^t
    F''(\omega)
    \dd A(\omega,F'(\omega))
    \text{\quad a.s.}
  \end{equation}
\end{theorem}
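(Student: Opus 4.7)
The plan is to prove \eqref{eq:Ito} by a telescoping-plus-Taylor argument along the partition sequence $T^n_k$ associated with the pair $o=(\omega,F'(\omega))$, invoking Theorem~\ref{thm:continuous} for the first-order terms and Lemma~\ref{lem:QV} for the second-order terms. Since $F'$ is continuous and $\omega$ is continuous, $\phi:=F'(\omega)$ belongs to $C[0,\infty)$, so $(\omega,\phi)\in\Omega$ and the earlier constructions apply to this specific pair. I would fix $t>0$ and work throughout on the intersection of the almost-sure events supplied by Theorem~\ref{thm:continuous} (convergence of $(F'(\omega)\cdot\omega)^n$) and Lemma~\ref{lem:QV} (uniform convergence of $A^n$ to a nondecreasing continuous $A$).

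The first step is the telescoping identity
\[
  F(\omega(t))-F(\omega(0))
  =
  \sum_{k=1}^\infty
  \bigl[
    F(\omega(T^n_k\wedge t))-F(\omega(T^n_{k-1}\wedge t))
  \bigr],
\]
which for each fixed $n$ is effectively a finite sum by Lemma~\ref{lem:basic}. Taylor's theorem with Lagrange remainder expands each bracket as
\[
  F'(\omega(T^n_{k-1}\wedge t))\,\Delta^n_k\omega
  + \tfrac12 F''(\xi^n_k)(\Delta^n_k\omega)^2,
\]
where $\Delta^n_k\omega:=\omega(T^n_k\wedge t)-\omega(T^n_{k-1}\wedge t)$ and $\xi^n_k$ lies between $\omega(T^n_{k-1}\wedge t)$ and $\omega(T^n_k\wedge t)$. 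The sum of the first-order contributions is precisely $(F'(\omega)\cdot\omega)^n_t$ from \eqref{eq:integral}, so Theorem~\ref{thm:continuous} (applied to the pair $(\omega,F'(\omega))$) yields convergence to $\int_0^t F'(\omega)\dd\omega$ almost surely.

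The second-order sum is where the real work lies. Since $|\xi^n_k-\omega(T^n_{k-1}\wedge t)|\le 2^{-n}$ and $F''$ is uniformly continuous on the compact set $\omega([0,t])$, I can replace $F''(\xi^n_k)$ by $F''(\omega(T^n_{k-1}\wedge t))$ at a total cost bounded by $\epsilon_n A^n_t$ with $\epsilon_n\to 0$, which is negligible because $A^n_t$ converges by Lemma~\ref{lem:QV}. It remains to prove
\[
  \sum_{k=1}^\infty F''(\omega(T^n_{k-1}\wedge t))(\Delta^n_k\omega)^2
  \;\longrightarrow\;
  \int_0^t F''(\omega(s))\dd A_s,
\]
and this is the main obstacle. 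The key observation is that by the very definition of $T^n_k$ one has $|\omega(s)-\omega(T^n_{k-1})|\le 2^{-n}$ for every $s\in[T^n_{k-1},T^n_k]$, so the left-endpoint step function $\tilde g_n(s):=F''(\omega(T^n_{k-1}))$ converges uniformly on $[0,t]$ to $F''\circ\omega$ (via uniform continuity of $F''$ on $\omega([0,t])$), regardless of whether the time-mesh shrinks. The displayed sum equals the Stieltjes integral $\int_0^t\tilde g_n\dd A^n$ against the nondecreasing continuous function $A^n$; combining uniform convergence of integrands with the uniform convergence $A^n\to A$ from Lemma~\ref{lem:QV}, together with the resulting uniform bound on $A^n_t$, yields the target by a routine Riemann--Stieltjes approximation. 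Assembling the three limits then gives \eqref{eq:Ito} almost surely.
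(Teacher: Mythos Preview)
Your argument is correct and follows exactly the same route as the paper's proof: Taylor-expand along the partition $T^n_k$, sum, and pass to the limit using Theorem~\ref{thm:continuous} for the first-order part and Lemma~\ref{lem:QV} for the second-order part. The paper's proof is simply a three-line sketch of this, while you have filled in the details (the uniform-continuity replacement of $F''(\xi^n_k)$ by $F''(\omega(T^n_{k-1}))$ and the Riemann--Stieltjes convergence $\int\tilde g_n\dd A^n\to\int F''(\omega)\dd A$) that a careful reader would have to supply.
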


The notation $F'(\omega)$ and $F''(\omega)$ in \eqref{eq:Ito} stands for compositions:
e.g., $F'(\omega)(s):=F'(\omega(s))$ for $s\ge0$.
The integral
$
  \int_0^t
  F''(\omega)
  \dd A(\omega,F'(\omega))
$
can be understood in the Lebesgue--Stiltjes sense.
The arguments ``$(\omega,F'(\omega))$'' of $A$ refer to the sequence of partitions
(\eqref{eq:T} with $\phi:=F'(\omega)$) used when defining $A$.

\begin{proof}
  By Taylor's formula,
  \begin{multline*}
    F(\omega(T^n_k)) - F(\omega(T^n_{k-1}))
    =
    F'(\omega(T^n_{k-1}))
    \Bigl(
      \omega(T^n_k) - \omega(T^n_{k-1})
    \Bigr)\\
    +
    \frac12
    F''(\xi_k)
    \Bigl(
      \omega(T^n_k) - \omega(T^n_{k-1})
    \Bigr)^2,
  \end{multline*}
  where $\xi_k\in[\omega(T^n_{k-1}),\omega(T^n_{k})]$.
  It remains to sum this equality over $k=1,\ldots,K$,
  where $K$ is the largest $k$ such that $T^n_k\le t$,
  and to pass to the limit as $n\to\infty$.
\end{proof}

Since It\^o's formula \eqref{eq:Ito} holds for F\"ollmer's \cite{Follmer:1981} integral $\int_0^t F'(\omega)\dd\omega$ as well
(see the theorem in \cite{Follmer:1981}),
F\"ollmer's integral (defined only in the context of $\int F'(\omega)\dd\omega$) coincides with ours almost surely.
This is true for the sequence of partitions \eqref{eq:T} with $\phi:=F'(\omega)$,
provided it is dense (as required in F\"ollmer's definitions,
which in this case is equivalent to ours:
cf.\ \cite{\CTVI}, Proposition~4).

\section{The case of c\`adl\`ag integrand and integrator}
\label{sec:cadlag}

In this section we allow $\omega$ and $\phi$ to be c\`adl\`ag functions,
and this requires adding further components to Reality's move,
c\`adl\`ag functions $\lomega$ and $\uomega$ that control the jumps of $\omega$ in a predictable manner.
The sample space $\Omega$ (the set of all possible moves by Reality) now becomes
\begin{equation}\label{eq:sample-cadlag}
  \Omega
  :=
  \left\{
    (\omega,\lomega,\uomega,\phi) \in D[0,\infty)^4
    \st
    \forall t\in(0,\infty): \lomega(t-)\le\omega(t)\le\uomega(t-)
  \right\},
\end{equation}
where $D[0,\infty)$ is the Skorokhod space of all c\`adl\`ag real-valued functions on $[0,\infty)$,
and $f(t-)$ stands for the left limit $\lim_{s\uparrow t}f(s)$ of $f$ at $t>0$.

The $\Omega$ of the previous section, \eqref{eq:sample-continuous},
embeds into the $\Omega$ of this section, \eqref{eq:sample-cadlag},
by setting $\lomega:=\omega$ and $\uomega:=\omega$.

\begin{remark}
  The condition on the jumps of $\omega$ given in \eqref{eq:sample-cadlag} is similar to the condition given in \cite{\CTVI},
  which assumes that $\lomega$ and $\uomega$ are functions of $\omega$
  (i.e., that there are functions $f_*$ and $f^*$ such that $\lomega(t)=f_*(\omega(t))$ and $\uomega(t)=f^*(\omega(t))$ for all $t>0$)
  and that $\omega=(\lomega+\uomega)/2$.
  It covers two important special cases:
  \begin{itemize}
  \item
    the jumps $\Delta\omega(t):=\omega(t)-\omega(t-)$ of $\omega$ are bounded by a known constant $C$ in absolute value;
    this corresponds to $\lomega:=\omega-C$ and $\uomega:=\omega+C$;
  \item
    $\omega$ is known to be nonnegative (as price paths in real-world markets often are)
    and the relative jumps $\Delta\omega(t)/\omega(t-)$ (with $0/0:=0$) are bounded above by a known constant $C$;
    this corresponds to $\lomega:=0$ and $\uomega:=(1+C)\omega$.
  \end{itemize}
\end{remark}

Each $o=(\omega,\lomega,\uomega,\phi)\in\Omega$ is identified with the function $o:[0,\infty)\to\bbbr^4$
defined by
\begin{equation*}
  o(t)
  :=
  (\omega(t),\lomega(t),\uomega(t),\phi(t)),
  \quad
  t\in[0,\infty).
\end{equation*}
The sample space $\Omega$ is equipped with the universal completion $\FFF$
of the $\sigma$-algebra generated by the functions $o\in\Omega\mapsto o(t)$, $t\in[0,\infty)$.
After this change, the definitions of events, random variables,
stopping times $\tau$, and $\tau$-measurable random variables
remain as before (but with the new $\sigma$-algebra $\FFF$).

We need universal completion in the definition of $\FFF$ to have the following lemma.

\begin{lemma}\label{lem:bona-fide}
  If $A\subseteq\bbbr$ is a closed set, its entry time by $\omega$,
  \begin{equation*}
    \tau(o)
    :=
    \min\{t\in[0,\infty):\omega(t)\in A\},
  \end{equation*}
  $o$ standing for $(\omega,\lomega,\uomega,\phi)$,
  is a stopping time.
\end{lemma}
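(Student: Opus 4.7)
The plan is to verify separately the two components of the definition of a stopping time: the Galmarino-style path-dependence condition, and the $\mathcal{F}$-measurability of $\tau$. The former is elementary; the latter is where the switch to the universal completion in the definition of $\mathcal{F}$ pays off.

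First I would check that the infimum in the definition of $\tau(o)$ is actually attained whenever the set $S:=\{t\ge 0:\omega(t)\in A\}$ is non-empty, so that the ``$\min$'' is well-defined. If $\tau(o):=\inf S<\infty$ and $\tau(o)\notin S$, one can pick $t_n\in S$ with $t_n\downarrow\tau(o)$; right-continuity of $\omega$ together with closedness of $A$ then forces $\omega(\tau(o))=\lim_n\omega(t_n)\in A$, a contradiction. Hence $\omega(t)\notin A$ for all $t<\tau(o)$ and $\omega(\tau(o))\in A$ whenever $\tau(o)<\infty$. The Galmarino condition is now immediate: if $o|_{[0,\tau(o)]}=o'|_{[0,\tau(o)]}$ and $\tau(o)=\infty$, then $\omega'=\omega$ everywhere, so $\tau(o')=\infty$; and if $\tau(o)<\infty$, the characterization just established applied to $\omega'$ in place of $\omega$ gives $\tau(o')=\tau(o)$.

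For $\mathcal{F}$-measurability, I would write, for each $u\ge 0$,
\[
  \{\tau\le u\}
  \;=\;
  \bigl\{o\in\Omega:\exists\,t\in[0,u],\ \omega(t)\in A\bigr\}
  \;=\;
  \pi_\Omega\bigl(B_u\bigr),
\]
where $B_u:=\{(t,o)\in[0,u]\times\Omega:\omega(t)\in A\}$ and $\pi_\Omega$ denotes the coordinate projection. Viewing $\Omega$ as a Borel subset of the Polish space $D[0,\infty)^4$ equipped with the Skorokhod $J_1$ topology, I would invoke two standard facts: (i) the Borel $\sigma$-algebra of $D[0,\infty)$ coincides with the $\sigma$-algebra generated by the coordinate projections $\omega\mapsto\omega(t)$; and (ii) the evaluation map $(t,\omega)\mapsto\omega(t)$ is jointly Borel measurable on $[0,\infty)\times D[0,\infty)$. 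These together make $B_u$ a Borel subset of $[0,u]\times\Omega$, so $\{\tau\le u\}=\pi_\Omega(B_u)$ is an analytic, and hence universally measurable, subset of $\Omega$. Combining this with $\{\tau=\infty\}=\bigcap_{n}\{\tau>n\}$ yields the $\mathcal{F}$-measurability of $\tau$.

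The main obstacle is exactly this measurability step. The entry time of a c\`adl\`ag path into a closed set is, in general, \emph{not} measurable for the raw $\sigma$-algebra generated by the coordinate projections; this is the classical obstruction handled by the D\'ebut theorem. The argument above is the lightweight projection/analytic-sets incarnation of that theorem adapted to this setting, and it is precisely what motivates the strengthening of $\mathcal{F}$ to its universal completion in the definition \eqref{eq:sample-cadlag}.
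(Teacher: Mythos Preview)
Your argument is correct and follows essentially the same route as the paper's proof: write $\{\tau\le t\}$ as the projection onto $\Omega$ of the set $\{(s,o)\in[0,t]\times\Omega:\omega(s)\in A\}$, observe that this set is in $\mathcal{B}_t\otimes\mathcal{F}_0$ (the paper phrases this as progressive measurability of c\`adl\`ag processes, you phrase it via joint Borel measurability of the evaluation map on Skorokhod space---these are the same fact), conclude that the projection is analytic, and invoke universal measurability of analytic sets. Your treatment is in fact more complete than the paper's, which does not spell out either the attainment of the infimum or the Galmarino path-dependence condition.
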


\begin{proof}
  See, e.g., the third example in \cite{Dellacherie:1978}
  (combined with the universal measurability of analytic sets,
  Theorem III.33 in \cite{Dellacherie/Meyer:1978}).
  For completeness, however, we will spell out the simple argument.
  Since $A$ is closed, $\{\tau\le t\}$ is the projection onto $\Omega$ of the set
  $\{(s,o)\in[0,t]\times\Omega\st\omega(s)\in A\}$.
  In combination with the progressive measurability of c\`adl\`ag processes
  (such as $\mathfrak{S}_s(o):=\omega(s)$)
  this implies that,
  since $\{(s,o)\in[0,t]\times\Omega\st\omega(s)\in A\}$ is in the $\sigma$-algebra $\BBB_t\times\FFF_t$
  (where $\BBB_t$ stands for the Borel $\sigma$-algebra on $[0,t]$),
  $\{\tau\le t\}$ is analytic.
\end{proof}

\begin{remark}\label{rem:bona-fide}
  The analogues of Lemma~\ref{lem:bona-fide} also hold for $\phi$, $\lomega$, and $\uomega$ in place of $\omega$
  (as the same argument shows).
\end{remark}

The definitions of a simple trading strategy, a simple capital process,
a nonnegative capital process, and the outer measure
stay the same as in Section~\ref{sec:continuous}
apart from replacing the argument ``$o=(\omega,\phi)$'' by ``$o=(\omega,\lomega,\uomega,\phi)$'';
``almost sure'' is also defined as before.

The definition \eqref{eq:T} of $T^n_k$ is modified by replacing
the equality with an inequality:
$T^n_0(o):=0$ and
\begin{multline}\label{eq:T-cadlag}
  T^n_k(o)
  :=
  \inf
  \Bigl\{
    t>T^n_{k-1}(o)
    \st
    \left|\omega(t)-\omega(T^n_{k-1})\right| \vee \left|\phi(t)-\phi(T^n_{k-1})\right| \ge 2^{-n}
  \Bigr\},\\
  k=1,2,\ldots\,.
\end{multline}
After this change is made, the definition of $(\phi\cdot\omega)^n_s$ stays as before,
\eqref{eq:integral}.
The analogue of Lemma~\ref{lem:basic} still holds:

\begin{lemma}
  For each $n$, $T^n_k\to\infty$ as $k\to\infty$.
\end{lemma}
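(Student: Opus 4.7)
The plan is to adapt the proof of Lemma~\ref{lem:basic} to the c\`adl\`ag setting by replacing ``local uniform continuity'' with the standard structural fact that a c\`adl\`ag function on a compact interval admits, for any $\epsilon>0$, a finite partition into half-open pieces of oscillation smaller than $\epsilon$. Concretely: for any $f\in D[0,T']$ and any $\epsilon>0$ there is a partition $0=t_0<t_1<\cdots<t_m=T'$ with $\sup_{s,s'\in[t_{j-1},t_j)}|f(s)-f(s')|<\epsilon$ for every $j$; this is obtained by covering $[0,T']$ with neighbourhoods in which $f$ stays within $\epsilon/2$ of $f(t)$ on the right and within $\epsilon/2$ of $f(t-)$ on the left, and extracting a finite subcover.

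Fix $n$, and suppose for contradiction that $T^n_k$ does not tend to infinity. By the right-continuity of $\omega$ and $\phi$ at $T^n_{k-1}$, whenever $T^n_{k-1}<\infty$ there is some $\delta>0$ with $\left|\omega(t)-\omega(T^n_{k-1})\right|\vee\left|\phi(t)-\phi(T^n_{k-1})\right|<2^{-n}$ on $[T^n_{k-1},T^n_{k-1}+\delta)$, so $T^n_k>T^n_{k-1}$; hence $T^n_k$ is strictly increasing and converges to some finite $T$ with $T^n_k\le T$ for all $k$. Apply the structural fact above to $\omega$ and to $\phi$ on $[0,T+1]$ with $\epsilon:=2^{-n}$, and take a common refinement to obtain a single partition $0=t_0<t_1<\cdots<t_m=T+1$ on each of whose half-open pieces $[t_{j-1},t_j)$ both $\omega$ and $\phi$ have oscillation strictly less than $2^{-n}$.

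The key step to verify is that the piece of this partition containing $T^n_k$ has strictly larger index than the piece containing $T^n_{k-1}$: if $T^n_{k-1}\in[t_{j-1},t_j)$ then for every $t\in(T^n_{k-1},t_j)$ both $t$ and $T^n_{k-1}$ lie in $[t_{j-1},t_j)$, so the oscillation bound gives both $\left|\omega(t)-\omega(T^n_{k-1})\right|<2^{-n}$ and $\left|\phi(t)-\phi(T^n_{k-1})\right|<2^{-n}$; hence the defining set in \eqref{eq:T-cadlag} is disjoint from $(T^n_{k-1},t_j)$, so $T^n_k\ge t_j$ and $T^n_k$ lies in a piece of index at least $j+1$. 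Since $T^n_0=0$ lies in piece~$1$, induction gives $T^n_m\ge t_m=T+1>T$, contradicting $T^n_k\le T$ for all $k$. The only real obstacle is invoking (or briefly spelling out) the c\`adl\`ag decomposition lemma; once that is in hand the rest is a mechanical translation of the Lemma~\ref{lem:basic} argument, with the compactness-plus-uniform-continuity step replaced by compactness-plus-c\`adl\`ag-partition.
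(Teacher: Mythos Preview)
Your argument is correct. Both your proof and the paper's are compactness arguments, but the technical implementations differ. The paper covers $[0,t]$ by two-sided neighbourhoods $U_s$ of each $s$ in which $\omega$ changes by less than $|\Delta\omega(s)|+2^{-n}$ and $\phi$ by less than $|\Delta\phi(s)|+2^{-n}$; since such a neighbourhood may straddle a jump, it cannot contain at most one $T^n_k$, and the paper instead asserts (without details) that each contains fewer than~10. You instead invoke the standard c\`adl\`ag structure lemma to partition $[0,T+1]$ into finitely many half-open pieces $[t_{j-1},t_j)$ on which the oscillation of both $\omega$ and $\phi$ is strictly below $2^{-n}$; because the pieces are one-sided, no jump lies inside any piece, and you get the clean bound of at most one $T^n_k$ per piece, from which the contradiction is immediate. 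Your route trades the ad~hoc count for a standard lemma and is arguably cleaner; the paper's route is shorter on the page but leaves the ``less than~10'' to the reader.
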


\begin{proof}
  The proof is analogous to the proof of Lemma~\ref{lem:basic},
  except that now we choose a neighbourhood of each $s\in[0,t]$
  in which $\omega$ changes by less than $\left|\Delta\omega(s)\right|+2^{-n}$
  and $\phi$ changes by less than $\left|\Delta\phi(s)\right|+2^{-n}$.
  In each such neighbourhood there are less than 10 values of $T^n_k$ (for a fixed $n$).
\end{proof}

The following theorem asserts the almost sure existence of It\^o integral
in our current context.

\begin{theorem}\label{thm:cadlag}
  For each $t>0$,
  $(\phi\cdot\omega)^n_s$
  converges uniformly over $s\in[0,t]$ almost surely
  as $n\to\infty$.
\end{theorem}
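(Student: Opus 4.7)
The plan is to follow the proof of Theorem~\ref{thm:continuous} in Section~\ref{sec:proof} essentially verbatim, with one new ingredient: a localization that uses the predictable bound $\lomega\le\omega\le\uomega$ to tame the jumps of $\omega$ at the partition endpoints. Fix $t>0$, write $F$ for the event that $s\in[0,t]\mapsto(\phi\cdot\omega)^n_s$ fails to be uniformly Cauchy as $n\to\infty$, and, for each $M\in\bbbn$, set
\begin{equation*}
  \sigma_M
  :=
  \inf\{s\ge 0 : |\lomega(s)|\vee|\uomega(s)|\ge M\}.
\end{equation*}
By Lemma~\ref{lem:bona-fide} and Remark~\ref{rem:bona-fide}, $\sigma_M$ is a stopping time; since $\lomega,\uomega\in D[0,\infty)$ are bounded on $[0,t]$ for every $o\in\Omega$, the events $\{\sigma_M>t\}$ exhaust $\Omega$. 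Countable subadditivity of $\UpProb$ (combined via $\sum_M 2^{-M}(\cdot)$) therefore reduces the theorem to showing $\UpProb(F\cap\{\sigma_M>t\})=0$ for each fixed $M$.

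Fix $M$. On $\{\sigma_M>t\}$ the sample-space constraint gives $|\omega(s)|\le M$ and $|\Delta\omega(s)|\le 2M$ for all $s\in[0,t]$. Merge $T^n$ and $T^{n-1}$ into $(a_k)$ as before, define $x_{k,s}$ by \eqref{eq:x-ks} with $b_n:=n^2$, and apply the strategy of Lemma~\ref{lem:super} stopped at $\sigma_M$ (cease betting once $a_k>\sigma_M$). The estimate
\begin{equation*}
  |\phi(a'_{k-1})-\phi(a''_{k-1})|<2^{-n+1}
\end{equation*}
survives unchanged, because the strict pre-threshold inequality inherent in \eqref{eq:T-cadlag} still forces $|\phi(s)-\phi(T^m_{\ell-1})|<2^{-m}$ on each $[T^m_{\ell-1},T^m_\ell)$ (by c\`adl\`ag right-continuity, the infimum in \eqref{eq:T-cadlag} is attained, so strict inequality holds strictly below $T^m_\ell$). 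The new effect, and the main obstacle, appears at the right endpoint $a_k$, where a jump of $\omega$ may now occur:
\begin{equation*}
  |\omega(a_k)-\omega(a_{k-1})|
  \le
  |\omega(a_k-)-\omega(a_{k-1})|+|\Delta\omega(a_k)|
  \le
  2^{-n+1}+2M
\end{equation*}
whenever $a_k\le\sigma_M$. Hence $|x_{k,s}|\le n^2\cdot 2^{-n+1}(2^{-n+1}+2M)$, which is at most $0.5$ for all $n\ge N(M)$ with $N(M)$ depending only on $M$. Since only the tail behaviour as $n\to\infty$ matters, restricting attention to $n\ge N(M)$ is harmless.

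For $n\ge N(M)$, Lemma~\ref{lem:super} applied to $x_{k,s}$ and $-x_{k,s}$ and then averaged yields the analogue of \eqref{eq:final-2} on $\{\sigma_M>t\}$. The K29-based estimate \eqref{eq:square} also goes through: in \eqref{eq:sum} the subtracted term $n^4 2^{-2n+2}(\omega(s)-\omega(0))^2$ is bounded by $4M^2n^4 2^{-2n+2}$ on $\{\sigma_M>t\}$, so for all sufficiently large $n$ it never reaches the threshold $n^{-3}$ that triggers the stopping rendering $\tilde\K^n$ nonnegative, and Lemma~\ref{lem:O} then gives $\sup_{s\le t}\sum_k x_{k,s}^2=o(1)$ a.s.\ on $\{\sigma_M>t\}$. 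Combining these ingredients exactly as at the end of Section~\ref{sec:proof} produces a nonnegative capital process witnessing $\UpProb(F\cap\{\sigma_M>t\})=0$, and aggregating over $M$ finishes the proof.
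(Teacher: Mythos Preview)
Your proposal is correct and follows essentially the same route as the paper: localize via the stopping time $\sigma_M$ (the paper phrases this as passing to a ``modified game'' with the global constraint $\sup_s|\omega(s)|\le c$ and then lifting back via the weighted sum~\eqref{eq:witness}, which is exactly your $\sum_M 2^{-M}(\cdot)$), observe that the $\phi$-increment bound~\eqref{eq:phi} survives from the definition~\eqref{eq:T-cadlag}, replace the $\omega$-increment bound $2^{-n}$ by the crude bound coming from $|\omega|\le M$, and then repeat Section~\ref{sec:proof}.

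One point needs more care than you give it. You write that the threshold-based stopping from Section~\ref{sec:proof} ``render[s] $\tilde\K^n$ nonnegative''; in the c\`adl\`ag setting this is false, because $\omega$ can jump across the threshold and make the capital negative at the very instant of stopping. The paper avoids this by working in the modified game where $|\omega|\le 0.5$ holds \emph{globally}, so that for large $n$ the subtracted term in~\eqref{eq:sum} never exceeds $n^{-3}$ and no stopping is needed at all. In your framework the fix is immediate and consistent with what you already do for the Lemma~\ref{lem:super} strategy: stop $\tilde\K^n$ at $\sigma_M$ as well. Then $|\omega(s)-\omega(0)|\le 2M$ for all $s\le\sigma_M$ (including $s=\sigma_M$, since $\lomega(\sigma_M-)\le\omega(\sigma_M)\le\uomega(\sigma_M-)$), and once $4M^2 n^4 2^{-2n+2}\le n^{-3}$ the stopped process is nonnegative for every $o\in\Omega$, so Lemma~\ref{lem:O} applies legitimately.
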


\begin{proof}
  Fix $t>0$ and let $E$ be the event that $(\phi\cdot\omega)^n_s$ fails to converge uniformly over $s\in[0,t]$
  as $n\to\infty$.
  Our goal is to prove that $E$ is null.
  Assume, without loss of generality, that $\omega(0)=0$
  (this can be done as \eqref{eq:integral} is invariant with respect to adding a constant to $\omega$).

  First we notice (as in the proof of Theorem~1 of \cite{\CTVI})
  that it suffices to consider the modified game in which Reality does not output $\lomega$ and $\uomega$
  but instead is constrained
  to producing $\omega\in D[0,\infty)$ satisfying $\sup_{s\in[0,\infty)}\left|\omega(s)\right|\le c$
  for a given constant $c>0$.
  Indeed, suppose that the statement of Theorem~\ref{thm:cadlag} (for the given $t$)
  holds in the modified game for any $c$,
  and let $\mathfrak{S}^c$ be a nonnegative capital process
  witnessing that the analogue of the event $E$ in the modified game is null.
  A nonnegative capital process $\mathfrak{S}$ witnessing that $E$ is null in the original game
  can be defined as
  \begin{equation}\label{eq:witness}
    \mathfrak{S}_s
    :=
    \sum_{L=1}^{\infty}
    2^{-L}
    \mathfrak{S}^{2^L}_{s\wedge\sigma_L}
  \end{equation}
  where $\sigma_L$ is the stopping time
  \begin{equation*}
    \sigma_L
    :=
    \inf
    \left\{
      s
      \st
      \uomega(s)\vee(-\lomega(s))
      \ge
      2^L
    \right\}
  \end{equation*}
  (intuitively $\sigma_L$ is the first moment when we can no longer guarantee that $\omega$
  will not jump to or above $2^L$ in absolute value straight away;
  this is a stopping time by Lemma~\ref{lem:bona-fide} and Remark~\ref{rem:bona-fide}).
  Let us check that each addend in \eqref{eq:witness} is nonnegative not only in the modified but also in the original game.
  Indeed, if $\mathfrak{S}^{2^L}_s<0$ for some $s\le\sigma_L$,
  the nonnegativity of $\mathfrak{S}^{2^L}$ in the modified game (with $c=2^L$) implies that,
  for some $s'\in[0,s]$,
  $\left|\omega(s')\right|>2^L$.
  By~\eqref{eq:sample-cadlag},
  the last inequality implies $\uomega(s'-)>2^L$ or $\lomega(s'-)<-2^L$.
  Therefore, $\uomega(s'')>2^L$ or $\lomega(s'')<-2^L$ for some $s''<s'\le s\le\sigma_L$,
  which contradicts the definition of $\sigma_L$.
  Let us now check that $\mathfrak{S}$
  (which we already know to be nonnegative in the original game)
  witnesses that $E$ is null.
  If $(\omega,\lomega,\uomega,\phi)\in E$, there is a constant $c$
  bounding $-\lomega|_{[0,t]}$ and $\uomega|_{[0,t]}$ from above.
  Any addend in~\eqref{eq:witness} for which $2^L>c$ will tend to infinity as $s\to\infty$
  (and in fact will be infinite at $s=t$).

  In the rest of this proof Reality is constrained to $\sup_s\left|\omega(s)\right|\le c$.
  Without loss of generality, set $c:=0.5$.
  We follow the same scheme as for Theorem~\ref{thm:continuous},
  again defining $x_k$ by \eqref{eq:x} and $x_{k,s}$ by \eqref{eq:x-ks},
  with the same $b_n$.
  Notice that, for $n\ge2$, we always have
  \begin{equation}\label{eq:phi}
    \left|
      \phi(a'_{k-1})
      -
      \phi(a''_{k-1})
    \right|
    \le
    2^{-n+1}
  \end{equation}
  in \eqref{eq:x} and \eqref{eq:x-ks};
  therefore, we can replace \eqref{eq:bound-1} by
  \begin{equation*}
    \left|x_k\right|
    \le
    b_n 2^{-n+1}
    \le
    0.5
  \end{equation*}
  (with the analogous inequalities for $x_{k,s}$),
  where the last inequality is true $n\ge8$,
  which we assume from now on in this proof.

  Essentially the same argument as in Section~\ref{sec:continuous} shows that \eqref{eq:square} still holds.
  Indeed, it suffices to check \eqref{eq:sup}.
  The nonnegativity of the process~$\tilde\K^n$ follows, for sufficiently large $n$,
  from $\left|\omega_{[0,t]}\right|\le0.5$;
  namely, when $n^4 2^{-2n+2}0.25\le n^{-3}$,
  $\tilde\K^n$ will be nonnegative even when the addend $\sum_{k=1}^{\infty}(\cdots)^2$ in~\eqref{eq:sum} is ignored.
  Applying Lemma~\ref{lem:O} now again gives \eqref{eq:square}.

  The proof is now completed in the same way as the proof of Theorem~\ref{thm:continuous}.
\end{proof}

\section{The case of a predictably non-c\`adl\`ag integrand and a continuous integrator}
\label{sec:non-cadlag}

In this section we will consider non-c\`adl\`ag integrands $\phi$,
motivated by, first of all, Tanaka's formulas,
which involve integrators such as $\phi(t):=\III_{\omega(t)>a}$ (lower semicontinuous for continuous $\omega$),
$\phi(t):=\III_{\omega(t)\ge a}$ (upper semicontinuous for continuous $\omega$),
or $\phi(t):=\sign(\omega(t)-a)$ (in general neither).
Such functions are not even regulated:
they have \emph{essential discontinuities}
(i.e., points $t$ such that $\phi(t-)$ or $\phi(t+)$ do not exist).
We will define the It\^o integral $\int\phi\dd\omega$ for such $\phi$
in the case where there is some kind of synergy between $\phi$ and $\omega$:
very roughly, we will require that $\omega$ does not change much around the essential discontinuities of $\phi$
(which will cover the examples given at the beginning of this paragraph).
The results of this section are very preliminary;
in particular,
in this version of the paper we only consider the case of continuous~$\omega$.

We will require that the integrand $\phi$ be non-c\`adl\`ag ``in a predictable manner''.
Formally, we now  define the sample space $\Omega$ to be the set
\begin{multline}\label{eq:sample-non-cadlag}
  \Omega
  :=
  \Bigl\{
    (\omega,\phi,\lphi,\uphi) \in C[0,\infty)\times\bbbr^{[0,\infty)}\times D[0,\infty)^2
    \st
    \forall t\in(0,\infty):\\
    \lphi(t)\le\liminf_{s\downarrow t}\phi(s)\le\limsup_{s\downarrow t}\phi(s)\le\uphi(t)
    \And \lphi(t)\le\phi(t)\le\uphi(t);\\
    \text{the set $\{t\in[0,\infty)\st\lphi(t)\ne\uphi(t)\}$ is closed}
  \Bigr\}
\end{multline}
consisting of quadruples 
\begin{equation}\label{eq:o}
  o = (\omega,\phi,\lphi,\uphi)
\end{equation}
such that, intuitively,
$\lphi$ and $\uphi$ control the non-c\`adl\`ag jumps of $\phi$.

The $\Omega$ of the previous section, \eqref{eq:sample-cadlag},
under the restriction that $\omega$ is continuous,
embeds into the $\Omega$ of this section, \eqref{eq:sample-non-cadlag},
by setting $\lphi:=\phi$ and $\uphi:=\phi$.
Analogously to that section, we set
\begin{equation*}
  o(t)
  :=
  (\omega(t),\phi(t),\lphi(t),\uphi(t)),
  \quad
  t\in[0,\infty),
\end{equation*}
and equip the sample space $\Omega$ with the universal completion $\FFF$
of the $\sigma$-algebra generated by the functions $o\in\Omega\mapsto o(t)$, $t\in[0,\infty)$,
so that the definitions of events, random variables,
stopping times $\tau$, $\tau$-measurable random variables,
simple trading strategies, etc.,
carry over to this case as well.

One of the conditions of the main result (Theorem~\ref{thm:non-cadlag}) of this section
will involve a slight modification of the standard notion of box-counting dimension
(analogous to the modification of Riemann integrals to Riemann--Stiltjes integrals).
For an interval $I$ of the real line and a real-valued function $f$ defined on $I$,
the \emph{oscillation} of $f$ over $I$ is
\begin{equation*}
  \osc_I(f)
  :=
  \sup_{t_1,t_2\in I}
  \left|
    f(t_1) - f(t_2)
  \right|
  =
  \sup_{t\in I}f(t)
  -
  \inf_{t\in I}f(t).
\end{equation*}
Let $\omega$ be a real-valued function defined on $[0,\infty)$ and $E$ be a bounded subset of $[0,\infty)$.
Set
\begin{equation*}
  M_{\omega}(E,\epsilon)
  :=
  \min
  \left\{
    k\ge1
    \st
    \exists I_1\cdots\exists I_k:
    E \subseteq \bigcup_{i=1}^k I_i
    \And
    \max_{i=1,\ldots,k}\osc_{I_i}(\omega)\le\epsilon
  \right\},
\end{equation*}
where $I_1,I_2,\ldots$ range over the intervals of $[0,\infty)$,
and set
\begin{align*}
  \dim_{\omega}(E)
  :=
  \limsup_{\epsilon\downarrow0}
  \frac{\log M_{\omega}(E,\epsilon)}{\log(1/\epsilon)}.
\end{align*}
For the identity function $\omega(t):=t$, $\forall t\in[0,\infty)$,
this becomes the usual definition of upper box-counting dimension
(also known as Minkowski dimension,
although it was first given in this form only by Pontryagin and Shnirel'man \cite{Pontryagin/Shnirelman:1932}).

Let us say that \eqref{eq:o} is \emph{tame} at time $t$,
$\tame_t(o)$ in symbols,
if $\dim_{\omega}(E)<2$, where $E:=\{s\in[0,t]\st\lphi(s)\ne\uphi(s)\}$.

Now the definition \eqref{eq:T-cadlag} of $T^n_k$ is modified by setting $T^n_0(o):=0$
and, for $k\ge1$:
\begin{itemize}
\item
  if $\lphi(T^n_{k-1})=\uphi(T^n_{k-1})$,
  \begin{multline*}
    T^n_k(o)
    :=
    \inf
    \Bigl\{
      t>T^n_{k-1}(o)
      \st
      \left|\omega(t)-\omega(T^n_{k-1})\right| \vee \left|\phi(t)-\phi(T^n_{k-1})\right| \ge 2^{-n}\\
      \text{ or }
      \lphi(t)<\uphi(t)
    \Bigr\};
  \end{multline*}
\item
  if $\lphi(T^n_{k-1})<\uphi(T^n_{k-1})$,
  \begin{equation*}
    T^n_k(o)
    :=
    \inf
    \Bigl\{
      t>T^n_{k-1}(o)
      \st
      \left|\omega(t)-\omega(T^n_{k-1})\right| \ge 2^{-n}
    \Bigr\}.
  \end{equation*}
\end{itemize}
(The requirement ``the set $\{t\st\lphi(t)\ne\uphi(t)\}$ is closed'' in~\eqref{eq:sample-cadlag}
is a simple way to ensure that $T^n_k$ are indeed stopping times.)
With this change, the definition of $(\phi\cdot\omega)^n_s$ is \eqref{eq:integral}.
The analogue of Lemma~\ref{lem:basic} is obvious.

\begin{theorem}\label{thm:non-cadlag}
  For each $t>0$,
  $(\phi\cdot\omega)^n_s$
  converges uniformly over $s\in[0,t]$ almost surely
  on the event $\tame_t$
  as $n\to\infty$.
\end{theorem}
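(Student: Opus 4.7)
The plan is to adapt the proof of Theorem~\ref{thm:continuous}, splitting the partition intervals into \emph{good} ones starting at a point where $\lphi=\uphi$ and \emph{bad} ones starting in $E:=\{s\in[0,t]\st\lphi(s)<\uphi(s)\}$ (closed by the definition of $\Omega$), and using the tameness hypothesis to bound the number of bad intervals. I would first localize in two stages. Using the dyadic decomposition of~\eqref{eq:witness} with the stopping times $\sigma_L:=\inf\{s\st\uphi(s)\vee(-\lphi(s))\ge 2^L\}$, one reduces to the case where $\lphi$ and $\uphi$ are bounded by some constant $M$ on $[0,t]$. Since $\tame_t=\bigcup_{m\ge 1}\{\dim_\omega(E)\le 2-1/m\}$, one may then fix $\alpha\in(0,2)$ and prove convergence on $\{\dim_\omega(E)\le 2-\alpha\}$, summing the witnesses afterwards.

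I would then follow the scheme of Section~\ref{sec:proof}: merge the $n$-th and $(n-1)$-st partitions into $(a_k)$ and form $x_{k,s}$ as in~\eqref{eq:x-ks}, with $b_n:=n^2$. Call a merged interval $[a_{k-1},a_k]$ \emph{good} if both $a'_{k-1}$ and $a''_{k-1}$ satisfy $\lphi=\uphi$ there, and \emph{bad} otherwise. On good intervals the good-mode stopping rule gives $|\phi(a_{k-1})-\phi(a'_{k-1})|\le 2^{-n}$ and $|\phi(a_{k-1})-\phi(a''_{k-1})|\le 2^{-n+1}$, hence $|\phi(a'_{k-1})-\phi(a''_{k-1})|\le 3\cdot 2^{-n}$; on bad intervals, only the trivial $|\phi(a'_{k-1})-\phi(a''_{k-1})|\le 2M$ is available. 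In both cases $|x_{k,s}|\le\tfrac12$ from some $n_0$ on, so Lemma~\ref{lem:super} applies.

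The main obstacle is to verify that $\sup_{s\in[0,t]}\sum_k x_{k,s}^2\to 0$ almost surely. The good contribution is handled verbatim as in~\eqref{eq:sup}--\eqref{eq:sum}, via the K29 martingale. For the bad contribution, the key combinatorial claim is that on $\{\dim_\omega(E)\le 2-\alpha\}$ the number $N_n^{\mathrm{bad}}$ of $T^n$-bad intervals intersecting $[0,t]$ is at most $M_\omega(E,2^{-n-1})=O(2^{n(2-\alpha/2)})$ for large $n$. To see this, cover $E\cap[0,t]$ by $L$ intervals $I_1,\ldots,I_L$ of $\omega$-oscillation $\le 2^{-n-1}$; if two $T^n$-bad starts $s_i<s_j$ both lay in the same $I_\ell$, then the endpoint $t_i$ of the $i$-th bad interval would also lie in $[s_i,s_j]\subseteq I_\ell$, contradicting $|\omega(t_i)-\omega(s_i)|=2^{-n}>2^{-n-1}$. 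Moreover, any $T^n$-bad start $s$ forces the containing $T^{n-1}$-interval to be bad as well (otherwise the ``enter $E$'' clause of the good-mode $T^{n-1}$-stopping rule would fire at or before $s$, contradicting $s$ being strictly inside that $T^{n-1}$-interval), so inside each $T^n$-bad interval the $T^{n-1}$-partition is in bad mode and the $\omega$-oscillation $\le 2^{-n+1}$ can trigger only a bounded number of $T^{n-1}$-points. The analogous estimate at level $n-1$ then bounds the number of bad merged intervals by $O(2^{n(2-\alpha/2)})$, and their contribution to $\sum_k x_{k,s}^2$ is deterministically $O(n^4\cdot 2^{n(2-\alpha/2)}\cdot 2^{-2n})=O(n^4\cdot 2^{-n\alpha/2})=o(1)$.

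From here the proof concludes exactly as in Theorem~\ref{thm:continuous}: the supermartingales~\eqref{eq:super-1} and~\eqref{eq:super-2} furnish a nonnegative capital process $\bar\K^n$ satisfying~\eqref{eq:final-2}, and Lemma~\ref{lem:O} yields $\sup_{s\le t}\left|(\phi\cdot\omega)^n_s-(\phi\cdot\omega)^{n-1}_s\right|=O(\log n/n^2)$, whose summability delivers the uniform Cauchy property.
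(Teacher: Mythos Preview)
Your counting argument for the bad merged intervals has a genuine gap. You correctly show that the $T^{n-1}$-interval \emph{containing} a $T^n$-bad start $s$ is bad; but the inference ``so inside each $T^n$-bad interval $[s,t_s]$ the $T^{n-1}$-partition is in bad mode'' does not follow. That containing $T^{n-1}$-interval $[u,v]$ (with $u\in E$, bad mode) ends at the first $v$ where $|\omega(v)-\omega(u)|=2^{-n+1}$, and nothing prevents $v\in(s,t_s)$ with $v\notin E$. From $v$ onward the $T^{n-1}$-partition is in \emph{good} mode, where the stopping rule also fires on $|\phi-\phi(v)|\ge 2^{-n+1}$; since $\phi$ can oscillate arbitrarily often on $(v,t_s)$ while $\omega$ stays within $2^{-n}$ of $\omega(s)$, there may be unboundedly many $T^{n-1}$-points in $(s,t_s)$. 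Each such point $w_i$ has $a'_{k-1}=s\in E$, so each produces a bad merged interval, and your bound $O(2^{n(2-\alpha/2)})$ on their number---and hence your deterministic bound on $\sum_{\mathrm{bad}}x_{k,s}^2$---fails.

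The paper's proof avoids this by a finer four-way split of the ``bad'' indices. Those with $a_{k-1}\in E$ (kind~1) are counted exactly as you do. The problematic indices, with $a_{k-1}\notin E$ but $a'_{k-1}\in E$ or $a''_{k-1}\in E$ (kinds~2 and~3), are \emph{not} counted; instead a second K29 martingale is run over them, so that $\sum_k(\omega(a_k\wedge s)-\omega(a_{k-1}\wedge s))^2$ (over kind-2 $k$'s, say) differs from $\sum_j(\omega(T^{n-1}_j\wedge s)-\omega(T^{n-1}_{j-1}\wedge s))^2$ (over $T^{n-1}$-bad $j$'s) by a nonnegative capital process with small initial value. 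The latter sum \emph{is} deterministically $O(2^{(2-\delta)n}\cdot 2^{-2n})$ by the tameness hypothesis, and Lemma~\ref{lem:O} then controls the former. Only kind~4 (your ``good'' case) is handled by the original K29 argument from Section~\ref{sec:proof}. Your localization and your treatment of the good case are essentially fine; it is the attempt to treat kinds~2 and~3 by pure counting that breaks.
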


Before we prove Theorem~\ref{thm:non-cadlag},
we briefly discuss its statement, especially the condition $\tame^{\omega}_t(\phi)$.
A more detailed statement of Theorem~\ref{thm:non-cadlag} would be:
for each $t>0$,
the set of \eqref{eq:o} such that
\begin{itemize}
\item
  the sequence of functions $s\in[0,t]\mapsto(\phi\cdot\omega)^n_s$ fails to converge
  in the uniform metric on $[0,t]$
  as $n\to\infty$
\item
  and $\tame_t(o)$
\end{itemize}
is null.

The following lemma shows that the condition $\tame_t$ is mild in a certain sense.

\begin{lemma}\label{lem:mild}
  Almost surely,
  for any $t>0$, $\dim_{\omega}([0,t])=2$.
\end{lemma}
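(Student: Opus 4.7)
My plan is to sandwich $M_\omega([0,t],\epsilon)$ between quantities of order $\epsilon^{-2}$ by introducing a coarser sequence of stopping times depending only on $\omega$ and then exploiting the quadratic-variation structure.

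First I would define $\tilde T^n_k$ by $\tilde T^n_0:=0$ and $\tilde T^n_k:=\inf\{s>\tilde T^n_{k-1}\st|\omega(s)-\omega(\tilde T^n_{k-1})|=2^{-n}\}$, and set $\tilde K_n(t):=\#\{k\ge1:\tilde T^n_k\le t\}$. A mild modification of the proof of Lemma~\ref{lem:QV}, with $\phi$ removed from the stopping-time definitions, yields $2^{-2n}\tilde K_n(t)\to\tilde A_t$ almost surely for some nondecreasing, a.s.\ finite $\tilde A_t$ (the quadratic variation of $\omega$ along this coarser, $\omega$-only partition).

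For the upper bound I would cover $[0,t]$ by the intervals $[\tilde T^n_{k-1},\tilde T^n_k]$ for $k=1,\dots,\tilde K_n(t)$, together with $[\tilde T^n_{\tilde K_n(t)},t]$; each of these has $\omega$-oscillation at most $2\cdot 2^{-n}$, so $M_\omega([0,t],2^{-n+1})\le\tilde K_n(t)+1=O(2^{2n})$ almost surely, giving $\dim_\omega([0,t])\le2$. For the lower bound the key observation is: any interval $I$ with $\osc_I(\omega)<2^{-n}$ contains at most one $\tilde T^n_k$, because if it contained two then by convexity of $I$ it would contain two consecutive ones $\tilde T^n_k,\tilde T^n_{k+1}$, and then $|\omega(\tilde T^n_{k+1})-\omega(\tilde T^n_k)|=2^{-n}$ would force $\osc_I(\omega)\ge2^{-n}$, a contradiction. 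Consequently any cover of $[0,t]$ by intervals of oscillation strictly below $2^{-n}$ uses at least $\tilde K_n(t)+1$ intervals, so $M_\omega([0,t],\epsilon)\ge\tilde K_n(t)+1$ for every $\epsilon<2^{-n}$; combined with the asymptotics $\tilde K_n(t)\sim\tilde A_t\,2^{2n}$ this yields $\dim_\omega([0,t])\ge2$ on the a.s.\ event $\{\tilde A_t>0\}$.

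The main obstacle is thus the verification that $\tilde A_t>0$ almost surely, simultaneously for all $t>0$: this is what pins the lower bound at~$2$ rather than allowing it to collapse to~$0$ on, e.g., paths where $\omega$ is locally constant. Using the a.s.\ monotonicity of $s\mapsto\tilde A_s$ this reduces to the assertion that $\tilde A_t>0$ a.s.\ for each fixed rational $t>0$, which can be folded into the ambient null set by a volatility-forcing game-theoretic supermartingale in the same spirit as the supermartingale constructions already employed in Sections~\ref{sec:proof} and~\ref{sec:variation}.
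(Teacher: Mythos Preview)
Your approach differs substantially from the paper's. The paper invokes the game-theoretic Dubins--Schwarz theorem (Theorem~3.1 of \cite{\CTIV}) to reduce to standard Brownian motion and then appeals to L\'evy's modulus of continuity, whereas you work directly with $\omega$-only stopping times $\tilde T^n_k$ and their counting function $\tilde K_n(t)$. For the upper bound $\dim_\omega([0,t])\le2$ your argument is cleaner and entirely self-contained: the cover by the intervals $[\tilde T^n_{k-1},\tilde T^n_k]$ together with $2^{-2n}\tilde K_n(t)\to\tilde A_t<\infty$ a.s.\ does the job without any time change.

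The lower bound, however, has a genuine gap. You correctly isolate $\tilde A_t>0$ as the crux, but the promised ``volatility-forcing game-theoretic supermartingale'' cannot exist in this framework. On the constant path $\omega\equiv0$ (with arbitrary $\phi,\lphi,\uphi$) every simple capital process equals its initial capital for all time, since every bet on $\omega$ pays zero; consequently no nonnegative capital process can tend to infinity on that event, and the set $\{\omega\equiv0\}$ is not contained in any null set. On that path $\tilde A_t=0$ and $\dim_\omega([0,t])=0\ne2$ for every $t>0$. So the equality in the lemma cannot be an almost-sure statement in this setting; only the inequality $\dim_\omega([0,t])\le2$ can hold a.s. The paper's own displayed argument, incidentally, is also one-sided: the bound $M_\omega([0,c],\epsilon)\le n$ obtained from L\'evy's modulus yields only $\le2$, and no lower-bound argument is supplied there either.
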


\begin{proof}
  Let us use Theorem~3.1 in \cite{\CTIV} (a probability-free version of the Dubins--Schwarz result).
  The quadratic variation of $\omega$ was defined in Section~\ref{sec:variation},
  but here we can also use the definition given in \cite{\CTIV} (and not involving $\phi$).
  Since the union of countably many null events is null,
  it suffices to prove that $\dim_{\omega}([0,\tau])=2$,
  where $\tau$ is the first time when the quadratic variation of $\omega$ reaches a given constant $c>0$.
  This reduces our task to proving that $\dim_{\omega}([0,c])=2$ for a typical path $\omega$ of standard Brownian motion.
  If we divide $[0,c]$ into $n$ equal parts of length $c/n$, L\'evy's modulus of continuity theorem
  (see, e.g., \cite{Morters/Peres:2010}, Theorem~1.14) shows
  that the oscillation of $\omega$ on each part is equivalent to $\sqrt{2(c/n)\ln n}$ or less as $n\to\infty$.
  For $\epsilon:=\sqrt{2(c/n)\ln n}$ we get
  $$
    M_{\omega}([0,c],\epsilon)\le n
    =
    O
    \left(
      \frac{1}{\epsilon^2}
      \log\frac{1}{\epsilon}
    \right),
  $$
  which remains true as $\epsilon\downarrow0$.
\end{proof}

Lemma~\ref{lem:mild} can be interpreted to say that the condition $\dim_{\omega}(E)<2$
implicit in Theorem~\ref{thm:non-cadlag} means that $E$ is only slightly less massive than the whole of $[0,t]$.
On the other hand, the next lemma shows that the sets $\{\omega=a\}$ of essential discontinuities
in Tanaka's formulas are typically much less massive.

\begin{lemma}
  Let $a\in\bbbr$.
  Almost surely, for any $t>0$,
  \begin{equation}\label{eq:dim}
    \dim_{\omega}(\{\omega=a\}\cap[0,t])
    \le
    1.
  \end{equation}
\end{lemma}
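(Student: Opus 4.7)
The plan is to mimic the proof of Lemma~\ref{lem:mild}: apply the probability-free Dubins--Schwarz theorem (Theorem~3.1 in \cite{\CTIV}) to reduce to the corresponding statement for standard Brownian motion, and then control $M_{\omega}$ on the level set $\{\omega=a\}$ by combining L\'evy's modulus of continuity with a standard bound on the box-counting dimension of Brownian level sets. As in Lemma~\ref{lem:mild}, it suffices to establish \eqref{eq:dim} at stopping times $\tau_c$, the first times when the quadratic variation of $\omega$ reaches $c>0$; since countable unions of null sets are null, this reduces the problem to showing $\dim_{\omega}(\{\omega=a\}\cap[0,c])\le 1$ for a typical path $\omega$ of standard Brownian motion on $[0,c]$.

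For a typical Brownian path, I would divide $[0,c]$ into $n$ equal subintervals $J_1,\dots,J_n$, each of Lebesgue length $c/n$. By L\'evy's modulus of continuity, $\osc_{J_i}(\omega)\le\sqrt{2(c/n)\ln n}\,(1+o(1))$ uniformly in $i$, so that every subinterval has oscillation at most $\epsilon_n:=\sqrt{3(c/n)\ln n}$ once $n$ is large enough. The subfamily $\{J_{i_1},\dots,J_{i_k}\}$ whose elements meet $\{\omega=a\}$ is a cover of $\{\omega=a\}\cap[0,c]$ by intervals of oscillation at most $\epsilon_n$, and hence
\[
  M_{\omega}\bigl(\{\omega=a\}\cap[0,c],\epsilon_n\bigr)
  \le
  \#\{i\le n:J_i\cap\{\omega=a\}\ne\emptyset\}.
\]
The right-hand side is exactly the $(c/n)$-covering number of the Brownian level set $\{s\in[0,c]:\omega(s)=a\}$. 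It is classical that this level set has upper box-counting dimension $1/2$ almost surely (a consequence, e.g., of the fact that the local time at $a$ is the $1/2$-dimensional Hausdorff content up to a constant, together with the standard uniform upper bound on the number of $(c/n)$-intervals meeting the set); in particular, for any $\delta>0$ the displayed cardinality is $O(n^{1/2+\delta})$ almost surely.

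Putting the two ingredients together, and noting that $1/\epsilon_n$ and $n$ are related by $n\asymp\epsilon_n^{-2}\ln(1/\epsilon_n)$, I obtain
\[
  \frac{\log M_{\omega}(\{\omega=a\}\cap[0,c],\epsilon_n)}{\log(1/\epsilon_n)}
  \le
  \frac{(1/2+\delta)\log n + O(1)}{\tfrac12\log n - \tfrac12\log\ln n + O(1)}
  \longrightarrow
  1+2\delta
\]
as $n\to\infty$. Since the covering number is monotone in $\epsilon$ and $\epsilon_n$ is decreasing, the same bound controls arbitrary $\epsilon\downarrow 0$ by a sandwiching argument, and letting $\delta\downarrow 0$ yields $\dim_{\omega}(\{\omega=a\}\cap[0,c])\le1$ a.s. Taking a countable union over rational $c>0$ and using monotonicity of $\dim_\omega$ in $t$ then gives \eqref{eq:dim} for all $t>0$ simultaneously.

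The main obstacle I anticipate is the invocation of the upper box-counting dimension of Brownian level sets: the Hausdorff-dimension statement is standard, but the upper box-counting version requires a uniform estimate on the number of $(c/n)$-subintervals that hit $\{\omega=a\}$. I would either cite such a bound from a reference (derivable from the H\"older-$\tfrac12^-$ regularity and a local-time argument, or from Kahane's hitting estimates) or prove it in passing via a first-moment bound: the expected number of $i\le n$ with $J_i\cap\{\omega=a\}\ne\emptyset$ is $O(\sqrt n)$, and concentration upgrades this to an a.s.\ $O(n^{1/2+\delta})$ bound.
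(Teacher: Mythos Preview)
Your reduction to Brownian motion via the probability-free Dubins--Schwarz theorem is exactly the paper's first step, and your overall argument is correct. After the reduction, however, the paper takes a more direct route than you do.

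The paper simply invokes the downcrossing representation of local time (M\"orters--Peres, Theorem~6.1). The point is that one can bound $M_\omega(\{\omega=a\}\cap[0,c],\epsilon)$ \emph{directly}, without passing through the time metric: cover the level set by the successive intervals $[\alpha_k,\beta_k]$, where $\alpha_k$ is the first return to $a$ after $\beta_{k-1}$ and $\beta_k$ is the first time after $\alpha_k$ that $|\omega-a|=\epsilon/2$. On each such interval $\osc(\omega)\le\epsilon$, and the number of them up to time $c$ is controlled by the number of down/upcrossings of $[a,a+\epsilon/2]$ and $[a-\epsilon/2,a]$, which is $O(1/\epsilon)$ by the downcrossing representation. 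This immediately gives $M_\omega(\{\omega=a\}\cap[0,c],\epsilon)=O(1/\epsilon)$ and hence $\dim_\omega\le1$.

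Your approach instead decomposes the estimate into two pieces: L\'evy's modulus to convert time-length $c/n$ into $\omega$-oscillation $\epsilon_n$, and an upper box-counting bound for the level set in the time variable to count the relevant intervals. This is sound, and your calculation $\frac{(1/2+\delta)\log n}{(1/2)\log n}\to1+2\delta$ is correct, as is the sandwiching to pass from the sequence $\epsilon_n$ to general $\epsilon$. The cost is that you need the upper box-counting dimension of the level set, which (as you note) is a little more than the standard Hausdorff statement; your first-moment-plus-Markov-plus-subsequence sketch works, though it adds length. The paper's downcrossing argument sidesteps this entirely by working in the $\omega$-metric from the outset, so there is no need to go through L\'evy's modulus or the time-box-counting dimension at all.
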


\begin{proof}
  Similarly to the proof of Lemma~\ref{lem:mild},
  it suffices to prove~\eqref{eq:dim}
  for a typical path $\omega$ of standard Brownian motion.
  And in this case \eqref{eq:dim} follows immediately from, e.g.,
  the downcrossing representation of the local time at zero (or at $a$):
  see, e.g., \cite{Morters/Peres:2010}, Theorem~6.1.
\end{proof}

\begin{proof}[Proof of Theorem~\ref{thm:non-cadlag}]
  Fix $t>0$.
  As in the proof of Theorem~\ref{thm:cadlag},
  we will assume, without loss of generality, that $\left|\omega\right|\le0.5$;
  we will also assume, again without loss of generality,
  that the analogous inequalities hold for $\phi$, $\lphi$, and $\uphi$.
  (This will ensure $\left|x_{k,s}\right|\le b_n2^{-n}\le0.5$, assuming $n\ge7$.)
  Finally, since the union of countably many null sets is null, there is no loss of generality
  in replacing the condition $\dim_{\omega}(E)<2$ by $\dim_{\omega}(E)<2-\delta$ for a given $\delta>0$,
  where $E:=\{s\in[0,t]\st\lphi(t)<\uphi(t)\}$.
  Therefore, we assume that, as $\epsilon\downarrow0$, $E$ can be covered by $O(\epsilon^{\delta-2})$ intervals $I$
  such that $\osc_I(\omega)\le\epsilon$.
  This implies that the number of $k$ such that $T^n_k\le t$ and $\lphi(T^n_k)<\uphi(T^n_k)$ is
  \begin{equation}\label{eq:O-2}
    O(2^{(2-\delta)n}).
  \end{equation}

  It suffices to show that \eqref{eq:square} still holds.
  Without loss of generality we assume that the sum is over the $k$ such that $a_{k-1}<t$.
  We divide such $k$ into four kinds
  (and in each of the four corresponding items below,
  the default is that $\sum_k$ stands for the sum over the $k$ of the kind considered in that item):
  \begin{enumerate}
  \item
    The first kind of $k$ are those satisfying $\lphi(a_{k-1})<\uphi(a_{k-1})$
    (remember that we are only interested in $k$ such that $a_{k-1}<t$).
    Since we always have $\left|\omega(a_k\wedge s)-\omega(a_{k-1}\wedge s)\right|\le2^{-n}$,
    \begin{equation*}
      \sup_{s\in[0,t]}
      \sum_k x_{k,s}^2
      =
      O
      \left(
        2^{(2-\delta)n}
        b_n^2 2^{-2n}
      \right)
      =
      o(1).
    \end{equation*}
    (We have used the fact that both the number of $k$ such that $T^n_k\le t$ and $\lphi(T^n_k)<\uphi(T^n_k)$ is \eqref{eq:O-2}
    and the number of $k$ such that $T^{n-1}_k\le t$ and $\lphi(T^{n-1}_k)<\uphi(T^{n-1}_k)$ is \eqref{eq:O-2}.)
  \item
    The second kind of $k$ are those satisfying
    $$
      \lphi(a_{k-1})=\uphi(a_{k-1}) \And \lphi(a''_{k-1})<\uphi(a''_{k-1}).
    $$
    Such $k$ satisfy $a''_{k-1}<a_{k-1}$ and $a'_{k-1}=a_{k-1}$.
    We cannot claim that the number of such $k$ is \eqref{eq:O-2}
    since different $k$ may lead to the same value of $a''_{k-1}$,
    so we will need a more complicated argument making use of the K29 martingale \eqref{eq:K29}.
    First we notice that
    \begin{multline}\label{eq:martingale}
      n^{-10}
      +
      \sum_k
      \left(
        \omega(a_k\wedge s)-\omega(a_{k-1}\wedge s)
      \right)^2\\
      -
      \sum_{\substack{k=1,2,\ldots:T^{n-1}_{k-1}<t,\\\lphi(T^{n-1}_{k-1})<\uphi(T^{n-1}_{k-1})}}
      \left(
        \omega(T^{n-1}_k\wedge s)-\omega(T^{n-1}_{k-1}\wedge s)
      \right)^2
    \end{multline}
    is the value of a simple capital process at time $s\le t$.
    The subtrahend in~\eqref{eq:martingale} is
    \begin{equation*}
      O
      \left(
        2^{(2-\delta)n}
        2^{-2n}
      \right)
      =
      o(n^{-10}),
    \end{equation*}
    and so the martingale value~\eqref{eq:martingale} is nonnegative,
    even if we ignore its second addend $\sum_k(\cdots)^2$.
    Let us make the simple capital process nonnegative by stopping trading when it becomes zero;
    this will not affect the process at all for large enough $n$.
    By Lemma~\ref{lem:O}, \eqref{eq:martingale} is $O(n^{-8})$ uniformly in $s\in[0,t]$,
    and so the second addend of~\eqref{eq:martingale} is $O(n^{-8})$, a.s.
    Therefore,
    \begin{multline*}
      \sup_{s\in[0,t]}
      \sum_k x_{k,s}^2
      =
      \sup_{s\in[0,t]}
      \sum_k
      b_n^2
      \left(
        \phi(a_{k-1})
        -
        \phi(a''_{k-1})
      \right)^2\\
      \times
      \left(
        \omega(a_k)\wedge s
        -
        \omega(a_{k-1})\wedge s
      \right)^2
      =
      O
      \left(
        n^4
        n^{-8}
      \right)
      =
      o(1)
      \quad
      \text{a.s.}
    \end{multline*}
  \item
    The third kind of $k$ are those satisfying
    $$
      \lphi(a_{k-1})=\uphi(a_{k-1}) \And \lphi(a'_{k-1})<\uphi(a'_{k-1}).
    $$
    Such $k$ are treated in the same way as the $k$ of the second kind.
  \item
    The last kind of $k$ are those for which
    $$
      \lphi(a_{k-1})=\uphi(a_{k-1}) \And \lphi(a'_{k-1})=\uphi(a'_{k-1}) \And \lphi(a''_{k-1})=\uphi(a''_{k-1}).
    $$
    Such $k$ satisfy \eqref{eq:phi},
    and we again have
    \begin{equation*}
      \sup_{s\in[0,t]}
      \sum_{k} x_{k,s}^2
      =
      o(1)
      \quad
      \text{a.s.}
      \qedhere
    \end{equation*}
  \end{enumerate}
\end{proof}

\section{Conclusion}
\label{sec:conclusion}

The most obvious directions of further research are:
\begin{itemize}
\item
  to explore the dependence of $\int\phi\dd\omega$ on the choice of the partitions $T^n_k$,
\item
  to extend Theorem~\ref{thm:Ito} to convex functions $F$,
\item
  and to relax the conditions of Theorem~\ref{thm:non-cadlag}.
\end{itemize}

\subsection*{Acknowledgments}

Thanks to Rafa\l\ \L{}ochowski for useful discussions and informing me about \cite{Lochowski:2015},
to Glenn Shafer for his comments,
to Nicolas Perkowski and David Pr\"omel for clarifying the relations with \cite{\PerkowskiPromelIntegrals},
and to Nicolas for a useful discussion of the desiderata for stochastic integration.
This research was supported by the Air Force Office of Scientific Research
(grant FA9550-14-1-0043).

\appendix
\renewcommand{\thesection}{A}
\refstepcounter{section}
\label{sec:A}
\section*{Appendix A: Useful discrete-time supermartingales}
\addcontentsline{toc}{section}{Appendix A: Useful discrete-time supermartingales}

Our proofs of Theorems~\ref{thm:continuous}, \ref{thm:cadlag}, and~\ref{thm:non-cadlag}
are based on a simple large-deviation-type supermartingale,
which will be defined in this appendix,
and on a classical martingale going back to \cite{Kolmogorov:1929LLN},
to be defined in Appendix~B below.

We consider the case of discrete time,
namely, the following perfect-information protocol:

\bigskip

\noindent
\textsc{Betting on bounded below variables}

\smallskip

\noindent
\textbf{Players:} Sceptic and Reality

\smallskip

\noindent
\textbf{Protocol:}

\parshape=5
\IndentI   \WidthI
\IndentI   \WidthI
\IndentII  \WidthII
\IndentII  \WidthII
\IndentII  \WidthII
\noindent
Sceptic announces $\K_0\in\bbbr$.\\
FOR $k=1,2,\dots$:\\
  Sceptic announces $M_k\in\bbbr$.\\
  Reality announces $x_k\in[-0.5,\infty)$.\\
  Sceptic announces $\K_k \le \K_{k-1} + M_k x_k$.

\bigskip

\noindent
We interpret $\K_k$ as Sceptic's capital at the end of round $k$.
Notice that Sceptic is allowed to choose his initial capital $\K_0$
and to throw away part of his money at the end of each round.

A \emph{process} is a real-valued function defined on all finite sequences
$(x_1,\ldots,x_K)$,
$K=0,1,\ldots$,
of Reality's moves.
If we fix a strategy for Sceptic,
his capital $\K_K$, $K=0,1,\ldots$, will become a process.
Such processes are called \emph{supermartingales}.
\begin{lemma}\label{lem:super}
  The process
  \begin{equation}\label{eq:super-1}
    \K_K
    :=
    \prod_{k=1}^K
    \exp
    \left(
      x_k
      -
      x_k^2
    \right)
  \end{equation}
  is a supermartingale.
\end{lemma}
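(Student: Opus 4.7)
The plan is to exhibit an explicit strategy for Sceptic whose capital equals $\prod_{k=1}^K \exp(x_k - x_k^2)$ pointwise in Reality's moves. The entire argument will reduce to the one elementary inequality
\[
  1 + x \;\ge\; \exp(x - x^2)
  \qquad\text{for all } x \ge -1/2,
\]
equivalent to $\ln(1+x) \ge x - x^2$ on the same range. To prove this I would set $g(x) := \ln(1+x) - x + x^2$ on $(-1, \infty)$, compute
\[
  g'(x) \;=\; \frac{1}{1+x} - 1 + 2x \;=\; \frac{x(1+2x)}{1+x},
\]
whose only zeros in the domain are $x = -1/2$ and $x = 0$, and observe that $g$ is increasing on $(-1,-1/2)$, decreasing on $(-1/2,0)$, and increasing on $(0,\infty)$. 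Hence $x=0$ is a local minimum with $g(0)=0$, while $g(-1/2) = 3/4 - \ln 2 > 0$, so $g \ge 0$ throughout $[-1/2, \infty)$.

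With this inequality in hand, I take $\K_0 := 1$ and $M_k := \K_{k-1}$ on every round $k$ (Sceptic bets his entire current capital on $x_k$). The protocol then allows any $\K_k \le \K_{k-1} + M_k x_k = \K_{k-1}(1+x_k)$; since $\K_{k-1}(1+x_k) \ge \K_{k-1}\exp(x_k - x_k^2)$ by the inequality above, Sceptic is permitted to set
\[
  \K_k \;:=\; \K_{k-1} \exp(x_k - x_k^2),
\]
the slack in the protocol (the freedom to discard capital) absorbing the gap. Iterating over $k$ yields exactly the process \eqref{eq:super-1}, so it is a supermartingale.

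The only real obstacle is the one-variable inequality above; the rest is bookkeeping. It is worth noting that exactly the same strategy also delivers the intermediate-time bound alluded to in Section~\ref{sec:proof}: for any partial increment $x_{k,s}$ with $|x_{k,s}| \le 1/2$, the value $\K_{k-1} + M_k x_{k,s} = \K_{k-1}(1 + x_{k,s})$ continues to dominate $\K_{k-1}\exp(x_{k,s} - x_{k,s}^2)$ by the same inequality, so the stronger statement used in the body of the paper follows from this proof without additional work.
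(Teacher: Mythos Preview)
Your proof is correct and follows essentially the same route as the paper: take $M_k=\K_{k-1}$, reduce to the one-variable inequality $\ln(1+x)\ge x-x^2$ on $[-1/2,\infty)$, and verify it by a derivative computation. Your factorisation $g'(x)=x(1+2x)/(1+x)$ is just a slightly more explicit form of the paper's comparison of the two derivatives $1-2x$ and $1/(1+x)$ on either side of $0$; the content is identical.
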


\noindent
We do not require the measurability of supermartingales \emph{a priori},
but \eqref{eq:super-1} is, of course, measurable.
The corresponding strategy for Sceptic used in the proof will be $M_k:=\K_{k-1}$,
and so will also be measurable.
The lemma will still be true if the interval $[-0.5,\infty)$ in the protocol is replaced by $[-0.683,\infty)$
(but will no longer be true for $[-0.684,\infty)$).

\begin{proof}
  It suffices to prove that on round $k$ Sceptic can turn a capital of $\K>0$
  into a capital of at least
  \begin{equation*}
    \K
    \exp
    \left(
      x_k
      -
      x_k^2
    \right);
  \end{equation*}
  in other words,
  that he can obtain a payoff $M_k x_k$ of at least
  \begin{equation*}
    \exp
    \left(
      x_k
      -
      x_k^2
    \right)
    -
    1.
  \end{equation*}
  This will follow from the inequality
  \begin{equation*}
    \exp
    \left(
      x_k
      -
      x_k^2
    \right)
    -
    1
    \le
    x_k.
  \end{equation*}
  Setting $x:=x_k$, moving the $1$ to the right-hand side, and taking logs of both sides,
  we rewrite this inequality as
  \begin{equation*}
    x - x^2
    \le
    \ln(1+x),
  \end{equation*}
  where $x\in[-0.5,\infty)$.
  Since we have an equality for $x=0$,
  it remains to notice that the derivative of the left-hand side of the last inequality
  never exceeds the derivative of its right-hand side for $x>0$,
  and that the opposite relation holds for $x<0$.
\end{proof}

Another useful process is
\begin{equation}\label{eq:super-2}
  \frac12
  \left(
    \prod_{k=1}^K
    \exp
    \left(
      x_k
      -
      x_k^2
    \right)
    +
    \prod_{k=1}^K
    \exp
    \left(
      -x_k
      -
      x_k^2
    \right)
  \right),
\end{equation}
which is a supermartingale in the protocol of betting on bounded variables,
where Reality is required to announce $x_k\in[-0.5,0.5]$.
(It suffices to apply Lemma~\ref{lem:super} to $x_k$ and $-x_k$
and to average the resulting supermartingales.)

\begin{remark}
  In this appendix we used the method described in \cite{Sasai/etal:2015}, Section~2;
  in fact, it is shown (using slightly different terminology) in \cite{Sasai/etal:2015} that
  \begin{equation*}
    \prod_{k=1}^K
    \exp
    \left(
      x_k
      -
      \frac{x_k^2}{2}
      -
      \left|x_k\right|^3
    \right)
  \end{equation*}
  is a supermartingale in the protocol of betting on bounded variables, $\left|x_k\right|\le\delta$
  for a small enough $\delta>0$
  (it is sufficient to assume $\delta\le0.8$).
\end{remark}

\renewcommand{\thesection}{B}
\refstepcounter{section}
\label{sec:B}
\section*{Appendix B: Another useful discrete-time supermartingale}
\addcontentsline{toc}{section}{Appendix B: A useful discrete-time martingale}

In this appendix we will define another process
used in the proofs of the main results of this paper
(in principle, we could have also used this process to replace in those proofs
the process defined in Appendix~A).

We still consider the case of discrete time.
The perfect-information protocol of this section is:

\bigskip

\noindent
\textsc{Betting on arbitrary variables}

\smallskip

\noindent
\textbf{Players:} Sceptic and Reality

\smallskip

\noindent
\textbf{Protocol:}

\parshape=5
\IndentI   \WidthI
\IndentI   \WidthI
\IndentII  \WidthII
\IndentII  \WidthII
\IndentII  \WidthII
\noindent
Sceptic announces $\K_0\in\bbbr$.\\
FOR $k=1,2,\dots$:\\
  Sceptic announces $M_k\in\bbbr$.\\
  Reality announces $x_k\in\bbbr$.\\
  $\K_k := \K_{k-1} + M_k x_k$.

\bigskip

\noindent
Sceptic's capital $\K_K$ as function of Reality's moves $x_1,\ldots,x_K$
for a given strategy for Sceptic
is a process called a \emph{martingale}
(this term is natural as our new protocol does not allow Sceptic to throw money away).

\begin{lemma}
  The process
  \begin{equation}\label{eq:K29}
    \K_K
    :=
    \sum_{k=1}^K
    x_k^2
    -
    \left(
      \sum_{k=1}^K
      x_k
    \right)^2
  \end{equation}
  is a martingale.
\end{lemma}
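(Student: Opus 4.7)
The plan is to verify directly that the process defined by~\eqref{eq:K29} is produced by a specific legal strategy for Sceptic. Since the protocol for arbitrary variables demands equality $\K_k = \K_{k-1} + M_k x_k$ with $M_k$ chosen before seeing $x_k$, the task reduces to exhibiting such an $M_k$ and confirming that its value depends only on $x_1,\ldots,x_{k-1}$.

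The first step is to compute the one-step increment. Writing $S_K := \sum_{k=1}^K x_k$, I would expand
\begin{equation*}
  S_K^2
  =
  (S_{K-1}+x_K)^2
  =
  S_{K-1}^2 + 2 S_{K-1} x_K + x_K^2,
\end{equation*}
so that
\begin{equation*}
  \K_K - \K_{K-1}
  =
  x_K^2 - S_K^2 + S_{K-1}^2
  =
  -2 S_{K-1} x_K.
\end{equation*}

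The second step is to read off the strategy: set $\K_0 := 0$ (which agrees with~\eqref{eq:K29} at $K=0$, as both sums are empty) and, at round $k$, bet
\begin{equation*}
  M_k
  :=
  -2 S_{k-1}
  =
  -2\sum_{j=1}^{k-1} x_j.
\end{equation*}
This $M_k$ is computed from the past moves $x_1,\ldots,x_{k-1}$ alone, so it is a legitimate move for Sceptic in the protocol of betting on arbitrary variables. Telescoping $\K_K = \sum_{k=1}^K(\K_k - \K_{k-1}) = \sum_{k=1}^K M_k x_k$ then gives~\eqref{eq:K29}, and the proof is complete.

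There is no real obstacle here; the only thing to watch for is the sign of the bet and the fact that the initial capital must be zero so that~\eqref{eq:K29} is matched exactly rather than up to an additive constant. The computation also explains the betting prescription used in Section~\ref{sec:proof} (cf.\ the remark after~\eqref{eq:sum} about betting $-2 n^4 2^{-2n+2}(\omega(a_k)-\omega(0))$), which is precisely this strategy applied to the rescaled increments $n^2 2^{-n+1}(\omega(a_k)-\omega(a_{k-1}))$.
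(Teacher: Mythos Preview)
Your proof is correct and follows essentially the same approach as the paper: both compute the one-step increment $\K_K-\K_{K-1}=-2\bigl(\sum_{k=1}^{K-1}x_k\bigr)x_K$ and observe that this has the required form $M_Kx_K$ with $M_K$ determined by the past. Your version is slightly more explicit about the initial capital $\K_0=0$ and the admissibility of the bet, but the argument is the same.
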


We will refer to \eqref{eq:K29} as the \emph{K29 martingale}.

\begin{proof}
  The increment of \eqref{eq:K29} on round $K$ is
  \begin{equation}\label{eq:increment}
    x_K^2
    -
    \left(
      \sum_{k=1}^K
      x_k
    \right)^2
    +
    \left(
      \sum_{k=1}^{K-1}
      x_k
    \right)^2
    =
    -2
    \left(
      \sum_{k=1}^{K-1}
      x_k
    \right)
    x_K
  \end{equation}
  and, therefore, is indeed of the form $M_K x_K$.
\end{proof}
\end{document}